\numberwithin{equation}{section}
\newcommand{\be}{\begin{eqnarray}}
\newcommand{\ee}{\end{eqnarray}}
\newcommand{\bee}{\begin{eqnarray*}}
\newcommand{\eee}{\end{eqnarray*}}
\newtheorem{theorem}{Theorem}[section]
\newtheorem{lemma}[theorem]{Lemma}
\newtheorem{remark}[theorem]{Remark}
\newtheorem{assumption}[theorem]{Assumption}
\newenvironment{proof}[1][Proof]{\textbf{#1.} }{\ \rule{0.5em}{0.5em}}
\begin{document}

\title{\LARGE\bf Pricing Model for Data Assets in Investment-Consumption Framework with Ambiguity
\footnote{\noindent The work is supported by NNSF of China (No. 12171169, 12371470, 12271188).}}

\date{}

\author{
Xiaoshan Chen~\thanks{School of Mathematical Science, South China Normal University, Guangzhou {\rm510631}, China, \texttt{xschen@m.scnu.edu.cn}.}$\quad$
Chen Yang~\thanks{Department of Systems Engineering and Engineering Management,The Chinese University of Hong Kong, New Territories, Hong Kong, \texttt{cyang@se.cuhk.edu.hk}.}$\quad$
Zhou Yang~\thanks{School of Mathematical Science, South China Normal University, Guangzhou {\rm510631}, China, \texttt{yangzhou@scnu.edu.cn}.}$\quad$}
\maketitle

\begin{abstract}
Data assets are data commodities that have been processed, produced, priced, and traded based on actual demand. Reasonable pricing mechanism for data assets is essential for developing the data market and realizing their value. Most existing literature approaches data asset pricing from the seller's perspective, focusing on data properties and collection costs, however, research from the buyer's perspective remains scarce. This gap stems from the nature of data assets: their value lies not in direct revenue generation but in providing informational advantages that enable enhanced decision-making and excess returns. This paper addresses this gap by developing a pricing model based on the informational value of data assets from the buyer's perspective. We determine data asset prices through an implicit function derived from the value functions in two robust investment-consumption problems under ambiguity markets via the indifference pricing principle. By the existing research results, we simplify the value function, using mathematical analysis and differential equation theory, we derive general expressions for data assets price and explore their properties under various conditions. Furthermore, we derive the explicit pricing formulas for specific scenarios and provide numerical illustration to describe how to use our pricing model.\\

{\bf Key words: } Data assets pricing, data economy, robust optimization, market ambiguity, portfolio-consumption problem \\

{\bf Mathematics subject classification:} 93E20, 49L99, 49N90, 35Q99, 65N06.
\end{abstract}

\baselineskip 18 pt
\section{Introduction}

With the rapid development of methods such as statistical methods, machine learning algorithms, artificial intelligence systems, and large language models, the value of data has become more and more significant, and it plays an indispensable role in various industries and practical applications \cite{FR}. Although the data may contain extremely high potential value, the value of raw data is often difficult to realize, and its utilization efficiency is relatively low. To realize the true value of data, raw data must undergo cleansing, processing, and handling based on  the customer's needs and specific application to form effective data goods, these products then become data assets through pricing and trading. Clearly the pricing of data assets is a critical component of this process, a rational pricing mechanism for data assets can effectively promote the development of the data market and realize the potential of data value.

The actual price of the data assets is the result of a game-theoretic equilibrium between buyers and sellers, who weigh their respective circumstances to determine an acceptable price range. This highlights the critical importance of considering both the buyer's and seller's perspectives in pricing. Most existing literature analyzes pricing strategies based on data characteristics and production costs from the seller's point. However, research from the buyer's perspective remains scarce due to inherent challenges: data assets typically do not generate direct cash flow for users, they instead provide informational value that aids decision-making to achieve  excess returns or utility via enhancing investor's understanding. These indirect benefits are difficult to observe and quantify, creating fundamental obstacles for buyer-centric pricing research.

In this paper, from the buyer's perspective, applying the indifference pricing principle, we establish a mathematical pricing model for data assets in the optimal investment-consumption framework, focusing on the core value of data assets -- information value. In this model, we assume the buyer acts as an investment-consumption decision-maker who estimates the market parameters based on known information. The amount of information determines the ambiguity of the parameters. Under constraints on investment-consumption strategies, loan-deposit interest rate spreads and market ambiguity, the decision-maker makes investment-consumption decisions to maximize her robust expected utility. After purchasing the data asset, although the initial wealth decreases, the decision-maker gains more information, leading to narrower parameter estimation ranges, thereby achieving excess utility. The reasonable price should ensure that the maximum robust expected utility after purchasing the data asset is no less than the utility without the purchase, which is the maximum price that the buyer is willing to pay for the data asset. The price for data asset is determined implicitly through the value functions of two control-constrained game problems. We derive price descriptions and analyze their properties under general conditions,  while providing explicit calculation expressions under specific conditions.

We now review the relevant literature. Research on data pricing can be traced back to \cite{JBMC} in 2008, which explored data valuation from the perspective of data privacy. Subsequent studies, such as \cite{LR} and \cite{ZBL}, explored pricing mechanisms based on data acquisition, storage, and maintenance costs. These works exclusively adopted the seller's perspective, deriving data asset prices from costs and intrinsic properties.

Academic research on data asset pricing emerged later. \cite{V} first systematically addressed this problem, emphasizing that the core value of data assets lies in their informational value. The paper introduced multiple pricing frameworks including cost approach, revenue approach, value function estimation, and complementary input analysis. \cite{FSVV} further organized existing pricing methodologies. While some approaches in them considered informational value from the buyer's perspective, they only proposed static pricing concepts rather than establishing rigorous mathematical models.

The work most closely related to ours is \cite{JS}, which pioneered dynamic information pricing from the buyer's perspective by employing the indifference pricing principle and optimal portfolio choice model. It posited that investors, prior to acquiring information, could only perceive the expected return as a random variable following a known normal distribution. However, upon purchasing the information, they would gain knowledge of its exact value and obtain excess expected utility. The fair information price was thus defined as the value that equalizes the expected utilities of optimal investment strategies before and after information acquisition. Our model diverges from this framework in three key aspects: First, we quantify information value through the magnitude of parameter ambiguity regions. This facilitates the incorporation of investment-consumption models with ambiguity, explicitly enhancing their robustness. Second, our framework operates under more realistic and generalized market conditions. Specifically, we consider a multi-asset economy with consumption capabilities, incorporate investment-consumption constraints, account for information-enhanced volatility estimation, and introduce borrowing-lending interest rate spreads. Finally, our analytical approach yields remarkably concise closed-form solutions (as detailed in Theorem \ref{th-la=0}).

Since the investment-consumption problem in ambiguous markets is the core of our pricing model,  a literature review on this topic is essential. Research on financial issues with ambiguity dates back to 2002 with \cite{Chen2002}, which investigated optimal investment and pricing under ambiguity in the expected return of risky assets. Subsequently, \cite{Epstein} extended the research to scenarios with ambiguous volatility. Since then, academia has conducted in-depth studies on optimal investment models under probability or parameter ambiguity, yielding substantial achievements as seen in works such as \cite{BP, BCZ, Denis, DV, Epstein2, GI, MPZ, PWZ}, among others. In our previous work in \cite{YLZ}, we examined the robust investment-consumption problem under constant relative risk aversion utility, considering ambiguity in both expected returns and volatility parameters, along with investment-consumption constraints and borrowing spreads. Later in \cite{Yang2023}, we first utilized the size of ambiguous parameter regions to quantify information availability, building an optimal investment-consumption model with information costs. Inspired by \cite{JS}, this paper employs the indifference pricing principle and the investment-consumption model in ambiguous markets to address data asset pricing from the buyer's perspective.

The main contribution of this work is three-fold. First, we pioneer the use of the size of parameter ambiguity regions to quantify the value of data assets in the dynamic portfolio selection problems, enabling the study of data asset pricing from the buyer's perspective through the indifference pricing principle. This methodology aligns with practical scenarios where the primary value of data resides in parameter estimation. Simultaneously, it integrates classical robust control method into data asset pricing to enhance robustness, while expanding the application scope of financial models under ambiguity.

Second, we derive an exceptionally concise pricing formula within a highly general framework, which holds potential to serve as a foundational basis for further research on equilibrium price of data assets. Our model assumes a market with multiple risky assets whose estimated expected returns and volatilities can be influenced by data assets, alongside investment-consumption constraints and potential borrowing-lending spreads. The resulting pricing formula is strikingly simple: for the case without consumption, see Theorem \ref{th-la=0}; for cases incorporating consumption, refer to Theorems \ref{th-lo1} and \ref{th-po1}. Notably, Theorem \ref{th-la=0} demonstrates that under the assumption of constant relative risk aversion (CRRA) utility, the data asset price is proportional to initial wealth and determined by the {\it Investment Opportunity Index} before and after acquiring the data asset. This index synthesizes the impacts on investment opportunities generated by the market parameters with ambiguity, investment constraints, and the decision-maker's risk aversion coefficient (which is elaborated in Remark \ref{rhola0}).

Third, we rigorously analyze the time-dependent monotonicity of data asset prices in consumption-inclusive settings (see Theorems \ref{th-lo1} and \ref{th-po1}). Despite obtaining closed-form price expressions, their complexity renders direct monotonicity analysis infeasible. By combining analytical methods with the comparison theorems for differential equation , we establish distinct monotonicity patterns under varying parameter regimes. These findings reveal intriguing behavior: when the utility discount rate is low, prices exhibit monotonic decay over time; conversely, with sufficiently high discount rates, prices first increase and subsequently decrease. The financial interpretation of this phenomenon is elaborated in Remark \ref{rhola1}.

The rest of the paper is organized as follows. We will construct the pricing model for the data asset in Section 2. In Section 3, we first show the existence and uniqueness of the data price, then we will show the explicit formula of the data asset price under independent investment and consumption constraints, analyze the properties to learn how the consumption complicates the analysis of the data price. We will discuss more specific examples to show the explicit calculation formula of the Investment Opportunity Index and present more concrete results in Section 4. Section 5 numerically illustrates our data asset pricing method. Section 6 concludes.

\section{Pricing model formulation}

In this section, we develop a data asset pricing model by synthesizing the optimal investment-consumption framework in ambiguous markets with the indifference pricing principle. We assume the data asset's buyer maximizes robust expected utility through choosing proper constrained investment-consumption strategies in the finance market with parameter ambiguity. Moreover, by acquiring the data asset, she obtains additional information to mitigate market parameter ambiguity, thereby securing enhanced robust expected utility. The model is structured as follows: Subsection 2.1 specifies the financial market assumptions, Subsection 2.2 formalizes investment-consumption constraints, Subsection 2.3 constructs the robust utility maximization framework, and Subsection 2.4 derives the pricing model via the indifference principles.

\subsection{Financial market hypothesis}

Let $W:=(W_1,W_2,\cdots,W_m)^{\rm T}$ be an $m$-dimensional standard Brownian motion on a filtered probability space $(\Omega,{\cal F},\mathbb{F}:=\{{\cal F}_t:t\geq 0\},\mathbb{P})$, where $\mathbb{F}$ is the augmented filtration generated by $W$ and satisfies the usual conditions. There are a riskless bank account $B$ and $n$ risky assets $S:=(S_1,S_2,\cdots,S_n)^{\rm T}$ in the financial market.  The price processes of the risky assets $S:=(S_1,S_2,\cdots,S_n)^{\rm T}$ satisfy the following stochastic differential equations (SDE),
\be\label{riskyasset}
 \mathrm{d}S_{i,s} = \mu_{i,s} S_{i,s} \mathrm{d}s + \sum_{j=1}^m\sigma_{ij,s} S_{i,s}\mathrm{d}W_{j,s},
\ee
where  $\mu:=(\mu_1,\mu_2,\cdots,\mu_n)^{\rm T}$ and $\Sigma:=(\sigma_{ij})_{n\times m}$ represent the drift and volatility of the risky assets.

Regarding the bank account $B$, we assume that the borrowing rate $R$ and the lending rate $r$ are different constants. If $B$ is positive, the investor lends with rate $r$, if $B$ is negative, the investor borrows with rate $R$, it is natural to assume that $R\geq r$. Consequently, the bank account $B$ follows
\be\label{riskfreeasset}
 {\rm d}B_s=(r B_s^+- R B_s^-)\,{\rm d}s=\big(r B_s- (R-r)B_s^-\big)\,{\rm d}s,
\ee
where $x^+=\max(0,x)$, and $x^-=\max(0,-x)$.

Consider the investor trades both the risky assets and riskless asset, yet he has limited information about the risky assets' parameters $(\mu,\Sigma)$. The uncertainty about drift and volatility of the risky assets is parameterized by a nonempty set with the form
\be\label{admissable set1}
 {\cal B}=\left\{(\mu_s,\Sigma_s)_{s\geq 0}:(\mu,\Sigma)\;\mbox{are}\,\mathbb{F}\text{-progressively measurable},\,
 \text{and}\ (\mu_s,\Sigma_s\Sigma^{\rm T}_s)\in\mathbb{B},\ \mathbb{P}\otimes {\rm ds}\text{-a.e.} \right\},
\ee
where {$\mathbb{B}$ is a convex and compact subset of $\mathbb{R}^n\times {\cal S}^n_+$, with ${\cal S}^n_+$ being the set of $n\times n$ positive semi-definite real symmetric matrixes. We also assume that $\mathbb{B}$ contains at least one element $(\mu,\Sigma)$ such that $\Sigma\Sigma^{\rm T}$ is positive definite.} The volume of the set $\mathbb{B}$ indicates the amount of uncertainty, the larger the volume, the larger the set of
alternative models, which means there is more uncertainty about the market parameters for the investor.  The investor can improve the accuracy of $\mu$ and $\Sigma$, and lessen $\mathbb{B}$ through buying relevant data asset and obtaining more information about the market parameters.

\subsection{Investment and consumption behavior hypothesis}

Let $c$ be the consumption rate proportional to the investor's wealth, $\pi:=(\pi_1,\pi_2,\cdots,\pi_n)^{\rm T}$ be the proportion of his wealth invested in risky assets. According to the self-financing condition and the formulas~\eqref{riskyasset} and~\eqref{riskfreeasset}, we can obtain the wealth process $X^{t,x; \pi, c, \mu, \sigma}$ with initial time and wealth $(t,x)$, which satisfies the following SDE,
\be \nonumber
 X^{t,x;\pi,c,\mu,\Sigma}_s&=&x+\int_t^s\Big[\,\mu_u^{\rm T}\pi_u
 +r(1-{\bf 1}_n^{\rm T}\pi_u)-(R-r)(1-{\bf 1}_n^{\rm T}\pi_u)^--c_u\,\Big]\,{X^{t,x;\pi,c,\mu,\Sigma}_u}\,{\rm d}u
 \\[2mm]\label{wealth equation}
 &&+\,\int_t^s{X^{t,x;\pi,c,\mu,\Sigma}_u}\pi^{\rm T}_u\Sigma_u \,{\rm d}W_u,\quad
 s\in[t,T],\;x>0,
\ee
where ${\bf 1}_n$ denotes the $n$-dimensional column vector of $1$'s.

The investor will select his portfolio-consumption strategies from the following admissible set with constraints on both portfolio and consumption:
\be\nonumber
 {\cal A}&=&\left\{\,(\pi_s,c_s)_{s\geq 0}:(\pi,c)\;\mbox{are\,$\mathbb{F}$-progressively measurable,}\,
 (\pi_s,c_s)\in \mathbb{A},\, \mathbb{P}\otimes
 {\rm d}s\text{-a.e.},\right.\\[2mm]\label{admissable set2}
 &&\ \int_0^T\left(|\pi_s|^2+c_s\right){\rm d}s<+\infty,\ \text{and}\ X^{t,x;\pi,c,\mu,\Sigma}\ \text{satisfies the
 condition (H)}\},
\ee
where $\mathbb{A}$ is a convex and closed subset of $\mathbb{R}^{n}\times [0,+\infty)$. The integrability condition on $(\pi,c)$ is to guarantee that the wealth process is well defined, while the condition (H) imposed on the wealth process $X^{t,x;\pi,c,\mu,\Sigma}$ depends on the utility maximization problem that we want to solve, and will be specified in (\ref{ClassD}) in the next section.

One typical example of the constraint set is
$\mathbb{A}=\bigotimes_{i=1}^n[\,\underline{\pi}_i,\overline{\pi}_i\,]\times[\,\underline{c},\overline{c}\,]$, where
{$\underline{\pi}_i,\overline{\pi}_i,\underline{c},\overline{c}$ are constants satisfying} $-\infty\leq \underline{\pi}_i\leq0,\;1\leq
\overline{\pi}_i\leq+\infty,\;0\leq\underline{c}\leq\overline{c}\leq+\infty$ for $i=1,\cdots,n$. The portfolio constraint cube
$\bigotimes_{i=1}^n\left[\,\underline{\pi}_i,\overline{\pi}_i\,\right]$ has the following financial interpretations: $\left(\sum_{i=1}^n\overline{\pi}_i-1\right)$ represents the maximum proportion of wealth that the investor is allowed to borrow to invest in the risky assets; $\left(-\sum_{i=1}^n\underline{\pi}_i\right)$ represents the largest short position that the investor is allowed to take;
$\underline{\pi}_i=0$ means prohibition of short selling the $i$th risky asset; $\overline{\pi}_i=1$ means prohibition of borrowing to invest
in the $i$th risky asset; and $-\underline{\pi}_i=\overline{\pi}_i=+\infty$ means no constraints on the $i$th risky asset. Moreover, the consumption
constraint $[\underline{c},\overline{c}]$ means that the investor should keep a minimal consumption proportion $\underline{c}$ for
subsistence purpose, and at the same time, his consumption is also controlled by an upper proportion bound $\overline{c}$ for the sake of future
consumption and investment.

\subsection{The robust utility maximization problem}

The expected utility of the investor derives from the intertemporal consumption and his terminal wealth, which is defined as:
\be\label{object functional}
\mathcal{J}_i(t,x; \pi, c, \mu, \Sigma) :=
\mathbb{E} \left[ \int_t^T \lambda e^{-\rho (s-t)}U_i\left(\,c_sX_s^{t,x; \pi,
c, \mu, \Sigma}\,\right)\mathrm{d}s+ e^{-\rho (T-t)}U_i\left(\,X_T^{t,x; \pi, c, \mu,
\Sigma}\,\right) \bigg|X_t=x \right],
\ee
where $i=1,2$ represent the power and logarithm
utility functions respectively, i.e., $U_1(x)=\frac{1}{p}x^{p}$ with $p\in(-\infty,0)\cup(0,1)$, and $U_2(x)=\ln x$. Herein $\lambda$ and $\rho$ are nonnegative constants, where the parameter $\lambda$ captures the weight of the intertemporal consumption relative to the bequest at maturity $T$, while $\rho$ is the discount factor.

Based on the known information, the investor can not achieve the exact market parameter $(\mu,\Sigma)$, and only know that $(\mu,\Sigma)\in {\cal B}$. Suppose that the investor adopts maximizing robust expected utility criterion to find an optimal investment-consumption strategy in ambiguous finance market, that is, the investor seeks for the optimal strategy that is least affected by model uncertainty \cite{YLZ}. In anticipation of the worst-case scenario, the investor finds $(\pi^{*}, c^{*})\in \mathcal{A}$ and $(\mu^{*}, \Sigma^{*})\in\mathcal{B}$ such that
\begin{equation}\label{stochastic control problem}
J_i(t,x;\mathbb{B}):= \sup_{(\pi, c)\in \mathcal{A}} \inf_{(\mu, \Sigma)\in \mathcal{B}} \mathcal{J}_i(t,x; \pi, c, \mu, \Sigma) = \mathcal{J}_i(t,x; \pi^{*}, c^{*}, \mu^{*}, \Sigma^{*}),\;\;(t,x)\in[0,T]\times(0,+\infty),
\end{equation}
$i=1,2$, where the function $J_i(.;\mathbb{B})$ is called a value function of the stochastic control problem \eqref{stochastic control problem}, representing the optimal robust expected utility.

To close this section, we further specify the condition (H) in the
admissible set $\mathcal{A}$ associated with the maxmin problem
\eqref{object functional}:
\begin{align}\label{ClassD}
\text{ Condition (H)}:=&\left\{\mathbb{E}\left[\int_t^T
U_i(c_sX_s^{t,x;\pi,c,\mu,\Sigma}){\rm d}s\right]<+\infty;\ \text{and the
family}\ U_i\left(X^{t,x;\pi,c,\mu,\Sigma}_{\tau}\right),\
\right.\notag\\[2mm]
&\left. \text{for}\ \tau\in[0,T]\ \text{as an}\
\mathbb{F}\text{-stopping time, is uniformly integrable}\right\},\;\;
i=1,2.
\end{align}
The integrability condition imposed on $U_i\left(X^{t,x;\pi,c,\mu,\Sigma}\right) $ is to include unbounded portfolio and consumption strategies.

\subsection{Pricing model for data asset}

We would like to apply the indifference pricing to price the data asset in a simple framework only considering the information value of the data asset from the buyer's perspective, hence the price in definition \eqref{indifference equality}  is the maximum cost the buyer is willing to pay for the data asset. We derive a concise result that advances the understanding of data asset pricing fundamentals, thereby establishing a foundation for future research on equilibrium pricing of data assets.

The maximum robust expected utility depends on the ambiguity set $\mathbb{B}$, which is determined by the investor's information. We assume that the ambiguity set was $\mathbb{B}_1$ before the investor bought the data asset,  when the investor pays a cost $P_i$ for the data asset, he will obtain a more accurate ambiguity set $\mathbb{B}_2$. According to the indifference pricing idea, we have the following equality
\be\label{indifference equality}
  J_i(t,x;\mathbb{B}_1)
  =J_i(t,x-P_i(t,x;\mathbb{B}_1,\mathbb{B}_2);\mathbb{B}_2),\;\;
  (t,x)\in[0,T]\times(0,+\infty),\;i=1,2,
\ee
where $\mathbb{B}_1\supset\mathbb{B}_2\neq\O$. This is because the investor acquires more information from the data asset, and estimates high-precision parameters, and reduces the volume of the ambiguity set.

The price $P_i$ of the data asset is determined by the equation \eqref{indifference equality}, where the function $J_i(\cdot)$ is   the value function of the stochastic control problem \eqref{stochastic control problem}. In the stochastic control problem, the state equation is SDE \eqref{wealth equation}, and the admissible sets ${\cal A},{\cal B}$ are defined by \eqref{admissable set2} and \eqref{admissable set1}, respectively, and the objective functional ${\cal J}_i$ is defined by \eqref{object functional}.

In the next section, we will show the rationality of the pricing model for data asset, i.e., there exists a unique $P_i$ satisfying \eqref{indifference equality}, moreover, $P_i$ is non-negative.

\section{The explicit form  and some properties of price $P_i$}

In Subsection 3.1, we first utilize the results in \cite{YLZ} to simplify the stochastic control problem \eqref{stochastic control problem} into a saddle point problem of multivariate function with constraints, then we will derive general expressions \eqref{eq-Pi} for the data asset $P_i$. In Subsection 3.2, we will analyze the properties of $P_i$ under the independent investment and consumption constraints to show how the consumption complicates the analyze of data price  $P_i$.

\subsection{The existence and uniqueness of $P_i$}

According to Lemma 3.1,  Theorem 3.2 and Theorem 3.3 in \cite{YLZ}, the stochastic control problem \eqref{stochastic control problem} can be reduced to a saddle point problem of the following multivariate function \eqref{definition of F} with constraints
\be\label{definition of F}
\left\{
\begin{array}{ll}
 F_1(x_q;x_\pi,x_c;x_\mu,x_\Sigma):={p-1\over 2}\,x_\pi^{\rm T}x_\Sigma x_\pi
 +\Big[\,x_\mu^{\rm T}x_\pi+r(1-{\bf 1}_n^{\rm T}x_\pi)^+-R(1-{\bf 1}_n^{\rm T}x_\pi)^-\,\Big]+{{\lambda\over
 p}e^{-x_q}x_c^{p}-x_c},
 \vspace{2mm}\\
 F_2(x_q;x_\pi,x_c;x_\mu,x_\Sigma):=-{1\over 2}\,x_\pi^{\rm T}x_\Sigma x_\pi
 +\Big[\,x_\mu^{\rm T}x_\pi+r(1-{\bf 1}_n^{\rm T}x_\pi)^+-R(1-{\bf 1}_n^{\rm T}x_\pi)^-\,\Big]+\lambda e^{-x_q}\ln x_c- x_c,
\end{array}
\right.
\ee
where $x_q\in \mathbb{R},\,(x_\pi,x_c)\in \mathbb{A},\,(x_\mu,x_\Sigma)\in\mathbb{B}$.

\begin{lemma}\label{lemma3.1} {\bf(Lemma 3.1,  Theorem 3.2 and Theorem 3.3 in \cite{YLZ})} For $i=1,2$, the function $F_i(x_q;\cdot,\cdot;\cdot,\cdot)$ admits at least one saddle point
$(x^*_\pi(x_q),x^*_c(x_q);x^*_\mu(x_q),x^*_\Sigma(x_q))$,
i.e., for any $x_q\in \mathbb{R}$, $(x_\pi, x_c)\in \mathbb{A}$ and $(x_\mu,x_\Sigma)\in \mathbb{B}$, it holds
\be\nonumber
 F_i(x_q;x^*_\pi(x_q),x^*_c(x_q);x_\mu,x_\Sigma)
 &\geq& F_i(x_q;x^*_\pi(x_q),x^*_c(x_q);x^*_\mu(x_q),x^*_\Sigma(x_q))
 \\[2mm]\label{saddle point of F}
 &\geq&
 F_i(x_q;x_\pi,x_c;x^*_\mu(x_q),x^*_\Sigma(x_q)),\;\;i=1,2.
\ee

\noindent Let
\footnote{Note that the saddle point $(x_\pi^*(x_q),x_c^*(x_q);x_\mu^*(x_q),x_\Sigma^*(x_q))$ depends on the ambiguity set $\mathbb{B}$. From \eqref{saddle point value}, we know that the function $G_i$ is unique and independent of the selection of saddle points even though there are multiple saddle points.}
\be\label{defintion of G}
G_i(x_q;\mathbb{B}):=F_i(x_q;x_\pi^*(x_q),x_c^*(x_q);x_\mu^*(x_q),x_\Sigma^*(x_q)),\;\;
i=1,2,
\ee
then the value function of stochastic control problem~\eqref{stochastic control problem} takes the following form of
\be\label{eq-Ji}
 J_i(t,x;\mathbb{B})=
 \left\{
 \begin{array}{ll}
 \frac{x^p}{p}e^{g_1(t;\mathbb{B})},&i=1,
 \vspace{2mm}\\
 g_{21}(t)\ln x+g_{22}(t;\mathbb{B}),\qquad &i=2,
 \end{array}
 \right.
\ee
where
\be\label{eq-g1}
 &&g_1(t;\mathbb{B})=\int_t^T\Big[\,pG_1(g_1(s;\mathbb{B});\mathbb{B})-\rho\Big]{\rm d}s,
 \\[2mm]\label{eq-g21}
&& g_{21}(t)=\left[{\lambda\over\rho}+\left(1-{\lambda\over\rho}\right)
 e^{-\rho(T-t)}\right]{\rm\bf I}_{\{\rho>0\}}
 +\left[1+\lambda(T-t)\right]{\rm\bf I}_{\{\rho=0\}}>0,
 \\[2mm]\label{eq-g22}
&& g_{22}(t;\mathbb{B})=\int_t^T e^{-\rho(s-t)}g_{21}(s)G_2\left( g_{21}(s);\mathbb{B}\,\right)\mathrm{d}s.
\ee
\end{lemma}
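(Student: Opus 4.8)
The plan is to establish the lemma by the dynamic-programming arguments of \cite{YLZ}: guess the separated form \eqref{eq-Ji}, reduce the Hamilton--Jacobi--Bellman--Isaacs (HJBI) equation of the game \eqref{stochastic control problem} to a scalar ordinary differential equation in $t$ by exploiting the homogeneity of $U_1,U_2$ together with the linearity of the wealth dynamics \eqref{wealth equation}, and then close the argument with a verification theorem. First I would write the HJBI equation attached to \eqref{stochastic control problem}; for $i=1$ it reads
\[
 \partial_t J+\sup_{(\pi,c)\in\mathbb A}\inf_{(\mu,\Sigma)\in\mathbb B}\Big[\tfrac12 x^2\pi^{\rm T}\Sigma\Sigma^{\rm T}\pi\,\partial_{xx}J+x\big(\mu^{\rm T}\pi+r(1-\mathbf 1_n^{\rm T}\pi)-(R-r)(1-\mathbf 1_n^{\rm T}\pi)^--c\big)\partial_x J+\lambda U_1(cx)\Big]-\rho J=0,
\]
with terminal condition $J(T,x)=U_1(x)$, and analogously for $i=2$.

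Substituting the ansatz $J_1(t,x;\mathbb B)=\frac{x^p}{p}e^{g_1(t;\mathbb B)}$, every term carries the common factor $x^p e^{g_1}$ except the running term $\lambda U_1(cx)=\frac{\lambda}{p}c^p x^p$, which carries only $x^p$; dividing by $x^p e^{g_1}$ makes the spatial variable disappear and turns $\lambda U_1$ into $\frac{\lambda}{p}e^{-g_1}c^p$. Using the elementary identity $r(1-\mathbf 1_n^{\rm T}\pi)-(R-r)(1-\mathbf 1_n^{\rm T}\pi)^-=r(1-\mathbf 1_n^{\rm T}\pi)^+-R(1-\mathbf 1_n^{\rm T}\pi)^-$ and identifying $(x_q,x_\pi,x_c,x_\mu,x_\Sigma)$ with $(g_1(t),\pi,c,\mu,\Sigma\Sigma^{\rm T})$, the bracket becomes exactly $F_1$, so the reduced equation is $g_1'(t)+pG_1(g_1(t);\mathbb B)-\rho=0$ with $g_1(T)=0$, i.e.\ \eqref{eq-g1}. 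The logarithmic case $i=2$ is entirely analogous: with $J_2=g_{21}(t)\ln x+g_{22}(t)$, collecting the $\ln x$ terms yields the linear ODE $g_{21}'-\rho g_{21}+\lambda=0$, $g_{21}(T)=1$, whose explicit solution is \eqref{eq-g21} (split according to $\rho>0$ and $\rho=0$), while the $x$-independent terms, after the pointwise optimization that defines $G_2$, integrate to the closed form \eqref{eq-g22}.

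It then remains to justify (i) the saddle-point existence for $F_i(x_q;\cdot,\cdot;\cdot,\cdot)$ and (ii) the passage from candidate to value function. For (i): fix $x_q$; the map $(x_\pi,x_c)\mapsto F_i$ is concave on the convex set $\mathbb A$ --- for $i=1$ the quadratic form $\frac{p-1}{2}x_\pi^{\rm T}x_\Sigma x_\pi$ is concave since $p<1$ and $x_\Sigma\succeq 0$, and $\frac{\lambda}{p}e^{-x_q}x_c^p-x_c$ is concave on $[0,\infty)$ (similarly $\lambda e^{-x_q}\ln x_c-x_c$ for $i=2$) --- and it is coercive in $x_c$; meanwhile $(x_\mu,x_\Sigma)\mapsto F_i$ is affine, hence convex, on the convex compact set $\mathbb B$. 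Sion's minimax theorem then gives $\sup\inf=\inf\sup$, and the compactness of $\mathbb B$ together with the concavity-plus-coercivity over $\mathbb A$ produces the optimizers, that is, the saddle point; the inequalities \eqref{saddle point of F} show that the common value $G_i(x_q;\mathbb B)$ does not depend on the chosen saddle point. I would also record from \cite{YLZ} that $x_q\mapsto G_i(x_q;\mathbb B)$ is (locally) Lipschitz, the regularity needed next.

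Part (ii) is the genuinely delicate step. The Lipschitz property of $G_1$ gives, by Picard's theorem, a unique $g_1\in C^1([0,T])$ solving \eqref{eq-g1}, and then $g_{21},g_{22}$ are explicit. The verification step applies It\^o's formula to $s\mapsto e^{-\rho(s-t)}J_i(s,X_s^{t,x;\pi,c,\mu,\Sigma})$ plus the accumulated running utility: the HJBI inequality forces the resulting drift to be nonpositive for every admissible $(\pi,c)$ (and dominating $(\mu,\Sigma)$) and to vanish along the feedback controls built from the saddle point of $F_i$ evaluated at $x_q=g_1(s)$ (resp.\ at the relevant argument in the log case), which is the standard verification argument for maxmin games and yields $J_i(t,x;\mathbb B)=\sup_{(\pi,c)}\inf_{(\mu,\Sigma)}\mathcal J_i$. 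The one technical obstacle is controlling the local-martingale and boundary contributions so that the localizing sequence may be removed and the optional-stopping argument remains legitimate even for unbounded $\pi,c$ --- this is precisely where Condition (H) of \eqref{ClassD}, namely the uniform integrability of $U_i(X^{t,x;\pi,c,\mu,\Sigma}_\tau)$ over $\mathbb F$-stopping times, is invoked. With this the candidate in \eqref{eq-Ji} is identified with the value function, completing the proof; the detailed estimates are carried out in Lemma 3.1, Theorem 3.2 and Theorem 3.3 of \cite{YLZ}.
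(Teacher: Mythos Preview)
The paper does not supply its own proof of this lemma: the header explicitly attributes it to Lemma 3.1, Theorem 3.2 and Theorem 3.3 of \cite{YLZ}, and the text moves on directly to Theorem \ref{expression of the price}. Your sketch is a faithful reconstruction of the standard dynamic-programming argument that \cite{YLZ} carries out --- write the HJBI equation, exploit the homothety of $U_i$ and the linearity of \eqref{wealth equation} to separate variables via the ansatz \eqref{eq-Ji}, reduce to the scalar ODEs \eqref{eq-g1}--\eqref{eq-g22}, invoke Sion's theorem on the concave--convex pair $(\mathbb A,\mathbb B)$ for the saddle point, and finish with a verification theorem under Condition (H). Nothing in your outline is incorrect, and the paper's own treatment is strictly less detailed (a bare citation), so your write-up is, if anything, more informative than the original.
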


\begin{remark}\label{re-saddle point}
From \eqref{saddle point of F}, we know that
\bee
 \inf\limits_{(x_\mu,x_\Sigma)\in\mathbb{B}}
 \sup\limits_{(x_\pi,x_c)\in\mathbb{A}}
 F_i(x_q;x_\pi,x_c;x_\mu,x_\Sigma)
 &\leq&
 \sup\limits_{(x_\pi,x_c)\in\mathbb{A}}
 F_i(x_q;x_\pi,x_c;x^*_\mu(x_q),x^*_\Sigma(x_q))
 \\[2mm]
 &\leq&
 F_i(x_q;x^*_\pi(x_q),x^*_c(x_q);x^*_\mu(x_q),x^*_\Sigma(x_q))
 \\[2mm]
 &\leq&
 \inf\limits_{(x_\mu,x_\Sigma)\in\mathbb{B}}
 F_i(x_q;x^*_\pi(x_q),x^*_c(x_q);x_\mu,x_\Sigma)\\[2mm]
 &\leq& \sup\limits_{(x_\pi,x_c)\in\mathbb{A}}
 \inf\limits_{(x_\mu,x_\Sigma)\in\mathbb{B}}
 F_i(x_q;x_\pi,x_c;x_\mu,x_\Sigma)
\eee
for any $(x_\mu,x_\Sigma)\in\mathbb{B},(x_\pi,x_c)\in\mathbb{A},x_q\in\mathbb{R}$, and $i=1,2$. On the other hand, it is clear that
$$
 \inf\limits_{(x_\mu,x_\Sigma)\in\mathbb{B}}
 \sup\limits_{(x_\pi,x_c)\in\mathbb{A}}
 F_i(x_q;x_\pi,x_c;x_\mu,x_\Sigma)\geq
 \sup\limits_{(x_\pi,x_c)\in\mathbb{A}}
 \inf\limits_{(x_\mu,x_\Sigma)\in\mathbb{B}}
 F_i(x_q;x_\pi,x_c;x_\mu,x_\Sigma)
$$
for any $x_q\in\mathbb{R}$ and $i=1,2$. So we have that
\be\nonumber
 \inf\limits_{(x_\mu,x_\Sigma)\in\mathbb{B}}
 \sup\limits_{(x_\pi,x_c)\in\mathbb{A}}
 F_i(x_q;x_\pi,x_c;x_\mu,x_\Sigma)
 &=&F_i(x_q;x^*_\pi(x_q),x^*_c(x_q);x^*_\mu(x_q),x^*_\Sigma(x_q))
 \\[2mm]\label{saddle point value}
 &=&\sup\limits_{(x_\pi,x_c)\in\mathbb{A}}
 \inf\limits_{(x_\mu,x_\Sigma)\in\mathbb{B}}
 F_i(x_q;x_\pi,x_c;x_\mu,x_\Sigma)
\ee
for any $x_q\in\mathbb{R}$ and $i=1,2$.
\end{remark}

According to Lemma 3.1, we can have the form of the value function $J_i$ if we find the saddle point of the function $F_i$. Next, we will show the rationality of the data asset's pricing model.

\begin{theorem}\label{expression of the price}
For any $(t,x)\in[0,T]\times(0,+\infty), \mathbb{B}_1\supset\mathbb{B}_2\neq\O$,  there exists a unique $P_i(t,x;\mathbb{B}_1,\mathbb{B}_2)\in[0,x),\;i=1,2$.
Moreover, $P_i$ is non-decreasing with $x$ and $\mathbb{B}_1$, and non-increasing with respect to $\mathbb{B}_2$, and takes the following form of
\be\label{eq-Pi}
 P_i(t,x;\mathbb{B}_1,\mathbb{B}_2)=
 \left\{
 \begin{array}{ll}
 x\left[1-
 \exp\left(\frac{g_1(t;\mathbb{B}_1)-g_1(t;\mathbb{B}_2)}
 {p}\right)\right],&i=1,\\[2mm]
 x\left[1-
 \exp\left(\frac{g_{22}(t;\mathbb{B}_1)-g_{22}(t;\mathbb{B}_2)}
 {g_{21}(t)}\right)\right],&i=2,
 \end{array}
 \right.
\ee
where the functions $g_1(t;\mathbb{B}),g_{21}(t),g_{22}(t;\mathbb{B})$ are defined in \eqref{eq-g1}, \eqref{eq-g21} and \eqref{eq-g22}, respectively.
\end{theorem}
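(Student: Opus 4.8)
The plan is to reduce everything to the closed-form representation \eqref{eq-Ji} together with two elementary monotonicity facts about the value function. First I would record that, for any admissible ambiguity set $\mathbb{B}$, the map $y\mapsto J_i(t,y;\mathbb{B})$ is continuous and strictly increasing on $(0,+\infty)$: this is immediate from \eqref{eq-Ji}, using $p\neq 0$ so that $x\mapsto x^p/p$ is strictly increasing whatever the sign of $p$, and $g_{21}(t)>0$ from \eqref{eq-g21}; I would also note that its range is an open interval $(\ell,u)$ with $\ell\in\{0,-\infty\}$ and $u\in\{0,+\infty\}$ (e.g. $(0,+\infty)$ when $i=1$, $p>0$, and $(-\infty,0)$ when $i=1$, $p<0$). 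Second, since $\mathbb{B}_1\supset\mathbb{B}_2$ entails that the family of admissible parameter processes $\mathcal{B}_2$ is contained in $\mathcal{B}_1$, for each fixed $(\pi,c)\in\mathcal{A}$ the inner infimum over $\mathcal{B}_2$ dominates the one over $\mathcal{B}_1$; taking the supremum over $(\pi,c)$ gives $J_i(t,x;\mathbb{B}_1)\leq J_i(t,x;\mathbb{B}_2)$ for all $(t,x)$. (Equivalently one may argue at the level of \eqref{saddle point value}, which exhibits $G_i(x_q;\cdot)$ as an infimum over $\mathbb{B}$; for $i=2$ this passes to $g_{22}$ directly through the positive-weighted integral \eqref{eq-g22}, and for $i=1$ to $g_1$ via a comparison principle for the terminal-value ODE $g_1'=\rho-pG_1(g_1;\mathbb{B})$ read off from \eqref{eq-g1}.)

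Existence and uniqueness then follow from the intermediate value theorem applied to $y\mapsto J_i(t,y;\mathbb{B}_2)$: this strictly increasing continuous bijection onto $(\ell,u)$ attains the value $J_i(t,x;\mathbb{B}_1)$ at exactly one point $y^\ast\in(0,+\infty)$, because the two facts above give $\ell<J_i(t,x;\mathbb{B}_1)\leq J_i(t,x;\mathbb{B}_2)<u$. Setting $P_i:=x-y^\ast$ yields the unique solution of \eqref{indifference equality}, and $0\leq P_i<x$ since $y^\ast\leq x$ (from $J_i(t,x;\mathbb{B}_1)\leq J_i(t,x;\mathbb{B}_2)$ and monotonicity in wealth) while $y^\ast>0$.

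The explicit formula \eqref{eq-Pi} is obtained by solving \eqref{indifference equality} with \eqref{eq-Ji}: for $i=1$, $\frac{(x-P_1)^p}{p}e^{g_1(t;\mathbb{B}_2)}=\frac{x^p}{p}e^{g_1(t;\mathbb{B}_1)}$ gives $(x-P_1)/x=\exp\!\big((g_1(t;\mathbb{B}_1)-g_1(t;\mathbb{B}_2))/p\big)$ after taking the unique positive $p$-th root; for $i=2$, $g_{21}(t)\ln(x-P_2)+g_{22}(t;\mathbb{B}_2)=g_{21}(t)\ln x+g_{22}(t;\mathbb{B}_1)$ gives $(x-P_2)/x=\exp\!\big((g_{22}(t;\mathbb{B}_1)-g_{22}(t;\mathbb{B}_2))/g_{21}(t)\big)$, dividing by $g_{21}(t)>0$. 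Monotonicity in $x$ is then clear, since \eqref{eq-Pi} has the form $P_i=x\,\kappa_i(t)$ with $\kappa_i(t)\in[0,1)$ independent of $x$. For the dependence on the ambiguity sets I would use \eqref{indifference equality} together with the monotonicity facts: enlarging $\mathbb{B}_1$ decreases its left-hand side $J_i(t,x;\mathbb{B}_1)$, hence decreases $J_i(t,x-P_i;\mathbb{B}_2)$, hence (monotonicity in wealth) decreases $x-P_i$, i.e. increases $P_i$; enlarging $\mathbb{B}_2$ decreases $J_i(t,\cdot;\mathbb{B}_2)$ pointwise, so the equation $J_i(t,x-P_i;\mathbb{B}_2)=J_i(t,x;\mathbb{B}_1)$ forces $x-P_i$ up, i.e. $P_i$ down.

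The only genuinely delicate point is the comparison $J_i(t,x;\mathbb{B}_1)\leq J_i(t,x;\mathbb{B}_2)$, together with its prerequisite that the representation \eqref{eq-Ji} be valid for $\mathbb{B}_2$ as well; this requires $\mathbb{B}_2$ to inherit the standing structural hypotheses on the ambiguity set (convex, compact, containing a parameter pair with positive-definite diffusion matrix), which I would state explicitly. Everything else is bookkeeping, the one recurring care being the sign of $p$ in the power-utility case, which is harmless because both $x\mapsto x^p/p$ and $a\mapsto e^{a/p}$ are increasing for every $p\neq 0$.
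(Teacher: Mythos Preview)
Your proposal is correct and follows essentially the same approach as the paper: both arguments rest on the closed form \eqref{eq-Ji} to obtain strict monotonicity and continuity of $J_i$ in wealth, use the sup--inf definition \eqref{stochastic control problem} to obtain $J_i(t,x;\mathbb{B}_1)\leq J_i(t,x;\mathbb{B}_2)$, apply the intermediate value theorem for existence and uniqueness (the paper packages this via the auxiliary function $\widehat{J}_i(t,y):=J_i(t,x;\mathbb{B}_1)-J_i(t,x-y;\mathbb{B}_2)$), and then read off \eqref{eq-Pi} and the monotonicity statements. Your remark that $\mathbb{B}_2$ must inherit the standing structural hypotheses for \eqref{eq-Ji} to apply is a worthwhile caveat that the paper leaves implicit.
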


\begin{proof}
It is clear that the functions $F_i,G_i,g_1,g_{21}$ and $g_{22}$ are independent of the initial wealth $x>0$, with $g_{21}(t)>0$. Then \eqref{eq-Ji} shows that $J_i$ is continuous and strictly increasing with the initial wealth $x>0$ for any $t\in[0,T]$ and $\mathbb{B}$. For any $y\in[0,x)$, define the function
$$
 \widehat{J}_i(t,y;\mathbb{B}_1,\mathbb{B}_2):
 =J_i(t,x;\mathbb{B}_1)-J_i(t,x-y;\mathbb{B}_2),
$$
the continuity and monotonicity of $J_i$ w.r.t. $x$ lead to the result that $ \widehat{J}_i(y;\mathbb{B}_1,\mathbb{B}_2)$
is continuous and  strictly increasing with respect to $y$.

Moreover, from the definition of $J_i$ in \eqref{stochastic control problem}, we know that
$$
 J_i(t,x;\mathbb{B}_1)
 \leq J_i(t,x;\mathbb{B}_2),\;\;\;(t,x)\in[0,T]\times(0,+\infty),\;\;
 \mathbb{B}_1\supset\mathbb{B}_2\neq\O,\;i=1,2,
$$
and
\bee
 \widehat{J}_i(t,0;\mathbb{B}_1,\mathbb{B}_2)
 =J_i(t,x;\mathbb{B}_1)-J_i(t,x;\mathbb{B}_2)\leq0,,\quad i=1,2,\\
  \lim\limits_{y\rightarrow x^-}\widehat{J}_i(t,y;\mathbb{B}_1,\mathbb{B}_2)
 =J_i(t,x;\mathbb{B}_1)-J_i(t,0;\mathbb{B}_2)>0,\quad i=1,2.
\eee
Thus there exists a unique $P_i\in[0,x)$ satisfying \eqref{indifference equality}.

From \eqref{indifference equality} and \eqref{eq-Ji}, it is not difficult to deduce that $P_i$ takes the form of \eqref{eq-Pi}.
Since $P_i\geq0$ and $x>0$, it is easy to show that $P_i$ is non-decreasing w.r.t. $x$ from \eqref{eq-Pi}.

Assume $\mathbb{B}_2\subset\mathbb{B}_1^1\subset\mathbb{B}^2_1$, then the definition of $J_i$ in \eqref{stochastic control problem} implies that
\bee
 &&J_i(t,x;\mathbb{B}^1_1)\geq J_i(t,x;\mathbb{B}^2_1),\\
 && \widehat{J}_i(t,P_i(t,x;\mathbb{B}^2_1,\mathbb{B}_2);\mathbb{B}^1_1,\mathbb{B}_2) \geq \widehat{J}_i(t,P_i(t,x;\mathbb{B}^2_1,\mathbb{B}_2);\mathbb{B}^2_1,\mathbb{B}_2)
 =0=\widehat{J}_i(t,P_i(t,x;\mathbb{B}^1_1,\mathbb{B}_2);\mathbb{B}^1_1,\mathbb{B}_2).
\eee
Combining the fact that $\widehat{J}_i(t,y;\mathbb{B}^1_1,\mathbb{B}_2)$ is continuous and strictly increasing with respect to $y$, we know that
$$
 P_i(t,x;\mathbb{B}^2_1,\mathbb{B}_2)\geq P_i(t,x;\mathbb{B}^1_1,\mathbb{B}_2),
$$
which means that $P_i$ is non-decreasing with $\mathbb{B}_1$. Repeating the similar argument, we can deduce that $P_i$ is  non-increasing with respect to $\mathbb{B}_2$.
\end{proof}

\begin{remark}
In fact, in our pricing model, the price $P_i$ for the data asset is the maximum purchase price that the investor accepts, if the data asset is valuable for the investor, then it implies that the more initial value the investor has, there will be a greater benefits in the optimal investment and consumption strategies. Hence we have the monotonicity of $P_i$ w.r.t. the initial value $x$, on the other hand, from the point of the seller, the potential customers should have enough money to afford the cost of the data asset, otherwise there is no need to customize the data asset for the customer.
\end{remark}

In the next subsection, we will study the impact of the investment strategies and consumption strategies on the data asset price under the assumption that the admissible investment-consumption strategies set can be separated.

\subsection{Results under independent investment constraint and consumption constraints}

In this subsection, we assume that the admissible investment strategy set is independent of consumption strategy. Concretely speaking, we assume that
\begin{assumption}\label{independent} The admissible investment-consumption strategy set $\mathbb{A}=\mathbb{A}^\pi\times\mathbb{A}^c$, where $\mathbb{A}^\pi$ and $\mathbb{A}^c$ are convex and closed subsets of $\mathbb{R}^n$ and $[0,+\infty)$, respectively.
\end{assumption}
In this case,  we rewrite \eqref{definition of F} as $F_i=F^1+F^2_i$ with
\be\label{eq-F1}
 &&F^1(x_\pi;x_\mu,x_\Sigma):={p-1\over 2}\,x_\pi^{\rm T}x_\Sigma x_\pi
 +\Big[\,x_\mu^{\rm T}x_\pi+r(1-{\bf 1}_n^{\rm T}x_\pi)^+-R(1-{\bf 1}_n^{\rm T}x_\pi)^-\,\Big],
 \\[2mm]\label{eq-F2}
 &&F^2_i(x_q;x_c):=
 \left\{
 \begin{array}{ll}
 {\lambda\over
 p}e^{-x_q}x_c^{p}-x_c,&i=1,
 \vspace{2mm}\\
 \lambda e^{-x_q}\ln x_c- x_c,&i=2.
 \end{array}
 \right.
\ee
Herein, with a slight abuse of notation, we take $p=0$ in the function $F^1$ for $i=2$. Then from \eqref{saddle point value} and \eqref{defintion of G}, we know that

\be\label{de-Gi2}
 G_i(x_q;\mathbb{B})
 =\sup\limits_{(x_\pi,x_c)\in\mathbb{A}}
 \inf\limits_{(x_\mu,x_\Sigma)\in\mathbb{B}}
 F_i(x_q;x_\pi,x_c;x_\mu,x_\Sigma)
 =K(\mathbb{B})
 +f_i(x_q),
\ee
where
\be\label{eq-K,eq-f}
 K(\mathbb{B}):=\sup\limits_{x_\pi\in\mathbb{A}^\pi}
 \inf\limits_{(x_\mu,x_\Sigma)\in\mathbb{B}}
 F^1(x_\pi;x_\mu,x_\Sigma),\qquad
 f_i(x_q):=\max\limits_{x_c\in\mathbb{A}^c}F^2_i(x_q;x_c).
\ee
From Lemma \ref{lemma3.1}, we know that $K(\mathbb{B})$ is a constant that depends on the ambiguity set $\mathbb{B}$.
Under the Assumption \ref{independent}, if we consider only the optimal investment problem without consumption, we have the following result.
\begin{theorem}\label{th-la=0}
Under the Assumption \ref{independent} and $\lambda =0$, $P_i$ has the following clear form of
\be\label{eq-Pi2}
 P_i(t,x;\mathbb{B}_1,\mathbb{B}_2)
 =x\left\{1-e^{[K(\mathbb{B}_1)-K(\mathbb{B}_2)](T-t)}\right\},\quad  i=1,2,
\ee
where $K(\mathbb{B})$ is defined in \eqref{eq-K,eq-f}.  Moreover, $P_i$ is strictly decreasing with respect to $t$ if $K(\mathbb{B}_1)\neq K(\mathbb{B}_2)$.
\end{theorem}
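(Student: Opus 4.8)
The plan is to note that Theorem \ref{expression of the price} already supplies existence, uniqueness, non-negativity and the closed form \eqref{eq-Pi} of $P_i$, so the only remaining task is to evaluate \eqref{eq-Pi} explicitly in the regime $\lambda=0$. In that regime the consumption reward in \eqref{eq-F2} disappears: $F^2_i(x_q;x_c)=-x_c$ for both $i=1,2$, so by \eqref{eq-K,eq-f} we get $f_i(x_q)=\max_{x_c\in\mathbb{A}^c}(-x_c)=-\underline{c}$ with $\underline{c}:=\min\mathbb{A}^c$ (this minimum exists because $\mathbb{A}^c$ is a nonempty closed subset of $[0,+\infty)$, hence bounded below with infimum attained). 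Thus $f_i$ is a constant, and \eqref{de-Gi2} gives $G_i(x_q;\mathbb{B})\equiv K(\mathbb{B})-\underline{c}$, independent of $x_q$, for $i=1,2$ (recall the convention $p=0$ in $F^1$, hence in $K$, for $i=2$).

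For $i=1$, substituting the constant $G_1\equiv K(\mathbb{B})-\underline{c}$ into \eqref{eq-g1} gives the closed form $g_1(t;\mathbb{B})=(T-t)\big[p(K(\mathbb{B})-\underline{c})-\rho\big]$. Forming the difference $g_1(t;\mathbb{B}_1)-g_1(t;\mathbb{B}_2)$, the terms involving $\underline{c}$ and $\rho$ cancel, leaving $(T-t)\,p\,[K(\mathbb{B}_1)-K(\mathbb{B}_2)]$; dividing by $p$ and inserting into the $i=1$ branch of \eqref{eq-Pi} yields \eqref{eq-Pi2}.

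For $i=2$, setting $\lambda=0$ in \eqref{eq-g21} collapses it --- uniformly in the cases $\rho>0$ and $\rho=0$ --- to $g_{21}(t)=e^{-\rho(T-t)}$. Substituting $G_2\equiv K(\mathbb{B})-\underline{c}$ into \eqref{eq-g22} and using $e^{-\rho(s-t)}e^{-\rho(T-s)}=e^{-\rho(T-t)}$ gives $g_{22}(t;\mathbb{B})=(K(\mathbb{B})-\underline{c})\,e^{-\rho(T-t)}(T-t)$. Hence $\big[g_{22}(t;\mathbb{B}_1)-g_{22}(t;\mathbb{B}_2)\big]/g_{21}(t)=(T-t)[K(\mathbb{B}_1)-K(\mathbb{B}_2)]$, and the $i=2$ branch of \eqref{eq-Pi} again produces \eqref{eq-Pi2}.

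Finally, for strict monotonicity in $t$: since $\mathbb{B}_1\supset\mathbb{B}_2$, comparing the infima in \eqref{eq-K,eq-f} over the larger versus the smaller set gives $K(\mathbb{B}_1)\leq K(\mathbb{B}_2)$ (equivalently, this is forced by $P_i\geq0$ in Theorem \ref{expression of the price} together with \eqref{eq-Pi2}), and the hypothesis $K(\mathbb{B}_1)\neq K(\mathbb{B}_2)$ then gives $K(\mathbb{B}_1)-K(\mathbb{B}_2)<0$. Consequently $t\mapsto[K(\mathbb{B}_1)-K(\mathbb{B}_2)](T-t)$ is strictly increasing on $[0,T]$, so $t\mapsto\exp\{[K(\mathbb{B}_1)-K(\mathbb{B}_2)](T-t)\}$ is strictly increasing and $P_i$ is strictly decreasing, as claimed. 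I do not expect a genuine obstacle here: the statement is essentially a specialization of \eqref{eq-Pi}. The only points requiring care are recognizing that $\lambda=0$ makes $g_{21}(t)$ exactly $e^{-\rho(T-t)}$, so that the integral in \eqref{eq-g22} telescopes and the discount rate $\rho$ drops out of the price, and keeping the $p=0$ convention in $F^1$ and $K$ straight for $i=2$.
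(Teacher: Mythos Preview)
Your proposal is correct and follows essentially the same approach as the paper's own proof: both reduce $G_i$ to the constant $K(\mathbb{B})-\underline{c}$ when $\lambda=0$, compute $g_1$, $g_{21}$, $g_{22}$ explicitly, plug into \eqref{eq-Pi}, and then use $K(\mathbb{B}_1)\leq K(\mathbb{B}_2)$ (from $\mathbb{B}_2\subset\mathbb{B}_1$) to obtain the strict monotonicity in $t$. Your write-up is in fact slightly more careful about justifying that $\underline{c}=\min\mathbb{A}^c$ exists and about tracking the $p=0$ convention for $i=2$.
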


\begin{proof} Since $\lambda=0$, then from \eqref{eq-F2}, we know that  $f_i(x_q)=-\min(c:c\in\mathbb{A}^c)$ is a constant, which is temporarily denoted as $-\underline{c}$. So, by \eqref{de-Gi2}, \eqref{eq-g1}, \eqref{eq-g21} and \eqref{eq-g22}, we have the following computation,
\bee
  g_1(t;\mathbb{B})&=&
  \int_t^T[\,pK(\mathbb{B})
 -p\underline{c}-\rho]{\rm d}s
 =[\,pK(\mathbb{B})
 -p\underline{c}-\rho](T-t),
 \\[2mm]
 g_{21}(t)&=& e^{-\rho(T-t)},
 \\[2mm]
 g_{22}(t;\mathbb{B})&=&
  \int_t^T[\,K(\mathbb{B})
 -\underline{c}\,]\,e^{-\rho(s-t)}g_{21}(s){\rm d}s
 =[\,K(\mathbb{B})
 -\underline{c}\,]\,e^{-\rho(T-t)}(T-t).
\eee
By the expression of $P_i$ in \eqref{eq-Pi}, we can obtain \eqref{eq-Pi2}.

From \eqref{eq-K,eq-f}, we know that $K(\mathbb{B})$ is non-increasing. Since $\mathbb{B}_2\subset \mathbb{B}_1$, we know that $K(\mathbb{B}_1)\leq K(\mathbb{B}_2)$ and $K(\mathbb{B}_1)-K(\mathbb{B}_2)\leq0$. Hence, \eqref{eq-Pi2} implies that $P_i$ is strictly decreasing with respect to $t$ if $K(\mathbb{B}_1)\neq K(\mathbb{B}_2)$.
\end{proof}

\begin{remark}
Theorem \ref{th-la=0} shows that in the optimal investment problem, the information of data asset helps the investor better understand  the financial market so that they can enhance the investment return, but with the passage of time, the time available to utilize this information decreases, leading to a lower price of the data asset.
\end{remark}

\begin{remark}\label{rhola0}
From Theorem \ref{th-la=0}, it is obvious that $K(\mathbb{B})$ plays a crucial role in the data asset price. If we revisit Lemma \ref{lemma3.1} and the calculations in the above proof, it becomes clear that when $\rho=\lambda=0$ and $0\in \mathbb A^c$, we find that the maximum robust utility is $\frac 1 p(xe^{K(\mathbb B)(T-t)})^p$ for the power utility, while  $\ln(xe^{K(\mathbb B)(T-t)})$ for the logarithmic utility. These expressions highlight the impact of investment on the maximum robust utility. Therefore, we call $K(\mathbb{B})$ as The Investment Opportunity Index, which reflects the impacts of investment environment factors such as investment constraints, borrowing/lending rates, market parameters, information and the risk aversion coefficient on investment utility.
\end{remark}

In the following, we will present the expressions for the data asset price under logarithm utility and power utility assumption. Moreover, we will show the properties of $P_i$ under different parameters assumption.

\begin{theorem}\label{th-lo1}
Under Assumption \ref{independent}, the price $P_2$ has the following explicit form of
\be\label{eq-Pi3}
 P_2(t,x;\mathbb{B}_1,\mathbb{B}_2)=\left\{
 \begin{array}{ll}
 x\left\{1-\exp\left[(K(\mathbb{B}_1)-K(\mathbb{B}_2))
 \left({1\over\rho}+\frac{(\rho-\lambda)(T-t)-1}
 {\lambda e^{\rho(T-t)}+\left(\rho-\lambda\right)}\right)\right]\right\},&\rho>0,
 \vspace{2mm}\\
 x\left\{1-\exp\left[(K(\mathbb{B}_1)-K(\mathbb{B}_2))(T-t)
 \frac{1+\frac{1}{2}\lambda(T-t)}{1+\lambda(T-t)}\right]\right\},&\rho=0,
 \end{array}
 \right.
\ee
where $K(\mathbb{B})$ is defined in \eqref{eq-K,eq-f}. Moreover, when $\rho\leq\lambda$ or $\lambda=0$, $P_2$ is strictly decreasing with respect to $t$ if $K(\mathbb{B}_1)\neq K(\mathbb{B}_2)$; whereas when $\rho>\lambda>0$ and $K(\mathbb{B}_1)\neq K(\mathbb{B}_2)$, there exists a $T^*\geq0$ such that $P_2$ is strictly decreasing with respect to $t$ in $[T^*,T]$, and strictly increasing with respect to $t$ in $[0,T^*]$.\footnote{$T^*>0$ if $T$ is large enough, Otherwise, $T^*=0$.}
\end{theorem}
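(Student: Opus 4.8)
The plan is to reduce the general formula \eqref{eq-Pi} for $i=2$ to a one-variable identity and then to study a single scalar function of $t$. Under Assumption \ref{independent}, \eqref{de-Gi2} gives $G_2(x_q;\mathbb{B})=K(\mathbb{B})+f_2(x_q)$, so substituting into \eqref{eq-g22} the $f_2$-contribution is independent of $\mathbb{B}$ and cancels in the difference:
\[
 g_{22}(t;\mathbb{B}_1)-g_{22}(t;\mathbb{B}_2)=\big(K(\mathbb{B}_1)-K(\mathbb{B}_2)\big)\,h(t),\qquad h(t):=\int_t^T e^{-\rho(s-t)}g_{21}(s)\,\mathrm{d}s.
\]
Hence \eqref{eq-Pi} reads $P_2=x\big[1-\exp\big((K(\mathbb{B}_1)-K(\mathbb{B}_2))\,\phi(t)\big)\big]$ with $\phi(t):=h(t)/g_{21}(t)$, and only $\phi$ has to be computed. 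Inserting the explicit $g_{21}$ from \eqref{eq-g21}: for $\rho>0$ one uses $e^{-\rho(s-t)}e^{-\rho(T-s)}=e^{-\rho(T-t)}$, so that $e^{-\rho(s-t)}g_{21}(s)=(\lambda/\rho)e^{-\rho(s-t)}+(1-\lambda/\rho)e^{-\rho(T-t)}$ integrates elementarily; dividing by $g_{21}(t)$ and clearing denominators reproduces the first line of \eqref{eq-Pi3}, while the case $\rho=0$ is an even simpler integral giving the second line.

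For the monotonicity in $t$: by \eqref{eq-K,eq-f} (as already used in the proof of Theorem \ref{th-la=0}), $\mathbb{B}_2\subset\mathbb{B}_1$ forces $K(\mathbb{B}_1)\le K(\mathbb{B}_2)$, so $K(\mathbb{B}_1)-K(\mathbb{B}_2)<0$ whenever $K(\mathbb{B}_1)\neq K(\mathbb{B}_2)$; hence $P_2$ has the same strict monotonicity in $t$ as $\phi$. The case $\lambda=0$ gives $\phi(t)=T-t$, which is Theorem \ref{th-la=0}. For $\lambda>0$, the key point is the identity $\rho\,g_{21}(t)-g_{21}'(t)=\lambda$, valid for all $\rho\ge0$ by \eqref{eq-g21}; together with $h'(t)=-g_{21}(t)+\rho\,h(t)$ it collapses the derivative of the unwieldy formula \eqref{eq-Pi3} into
\[
 \phi'(t)=-1+\frac{\lambda\,h(t)}{g_{21}(t)^2},
\]
so that $\phi'(t)$ has the sign of $\Psi(t):=\lambda h(t)-g_{21}(t)^2$, with $\Psi(T)=-1<0$.

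It remains to control the sign of $\Psi$. Differentiating $\Psi$ and using the same two identities yields the linear ODE $\Psi'(t)=\rho\,\Psi(t)-(\rho-\lambda)e^{-\rho(T-t)}g_{21}(t)$, equivalently $\big(e^{-\rho t}\Psi(t)\big)'=-(\rho-\lambda)e^{-\rho T}g_{21}(t)$, whose right-hand side has the constant sign of $\lambda-\rho$ (since $g_{21}>0$). If $\rho\le\lambda$, then $e^{-\rho t}\Psi(t)$ is non-decreasing in $t$ and equals $-e^{-\rho T}<0$ at $t=T$, hence it is negative for every $t<T$; thus $\Psi<0$ on $[0,T]$ and $P_2$ is strictly decreasing in $t$. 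If $\rho>\lambda>0$, then $e^{-\rho t}\Psi(t)$ is strictly decreasing and negative at $t=T$, so it has at most one zero; its value at $t=0$ equals $e^{-\rho T}\big[(\rho-\lambda)\int_0^T g_{21}(s)\,\mathrm{d}s-1\big]$, which is positive when $T$ is large (as $\int_0^T g_{21}\to\infty$) and non-positive for small $T$. Letting $T^*$ be that zero, or $T^*=0$ when none exists, one obtains $\Psi>0$ on $[0,T^*)$ and $\Psi<0$ on $(T^*,T]$, whence $P_2$ is strictly increasing in $t$ on $[0,T^*]$ and strictly decreasing on $[T^*,T]$, with $T^*>0$ exactly when $T$ is large enough (matching the footnote).

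The step I expect to be the main obstacle is this middle reduction: a head-on $\partial_t$ of the two-line closed form \eqref{eq-Pi3} is essentially intractable, so the crux is to keep $h$ and $g_{21}$ abstract, exploit $\rho g_{21}-g_{21}'=\lambda$ and $h'=-g_{21}+\rho h$ to obtain the compact expression for $\phi'$, reduce everything to the sign of $\Psi$, and only then solve the resulting linear ODE by an integrating factor (a comparison-type argument). Once this is in place, the two monotonicity regimes, the identification of $T^*$, and the boundary case $T^*=0$ are routine bookkeeping.
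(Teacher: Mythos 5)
Your proposal is correct. The derivation of the closed form \eqref{eq-Pi3} is the same as the paper's: you isolate $g_{22}(t;\mathbb{B}_1)-g_{22}(t;\mathbb{B}_2)=(K(\mathbb{B}_1)-K(\mathbb{B}_2))h(t)$ with $h(t)=\int_t^Te^{-\rho(s-t)}g_{21}(s)\,\mathrm{d}s$ (the paper's $g_{23}$) and integrate elementarily. For the monotonicity, both arguments share the same skeleton --- reduce to the sign of $\bigl(h/g_{21}\bigr)'$ and show the relevant auxiliary function is negative throughout when $\rho\le\lambda$ and changes sign exactly once when $\rho>\lambda>0$ --- but the execution differs in a substantive way. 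The paper differentiates the explicit ratio head-on, obtaining a numerator $\widetilde g_2(t)=\lambda\rho e^{\rho(T-t)}(\rho-\lambda)(T-t)-(\rho-\lambda)^2-(2\rho-\lambda)\lambda e^{\rho(T-t)}$, and must then run a second-order analysis: $\widetilde g_2'$ itself changes sign at $T-1/(\rho-\lambda)$, so the single sign change of $\widetilde g_2$ is extracted from its unimodality plus the endpoint values $\widetilde g_2(T)=-\rho^2<0$ and $\widetilde g_2(0)>0$ for large $T$. You instead keep $h$ and $g_{21}$ abstract, use the identities $\rho g_{21}-g_{21}'=\lambda$ and $h'=\rho h-g_{21}$ to get $\phi'=-1+\lambda h/g_{21}^2$, and observe that $e^{-\rho t}\Psi(t)$ with $\Psi=\lambda h-g_{21}^2$ satisfies $\bigl(e^{-\rho t}\Psi\bigr)'=(\lambda-\rho)e^{-\rho T}g_{21}(t)$, which is single-signed; your $\Psi$ equals $\rho^{-2}e^{-2\rho(T-t)}\widetilde g_2$, so the two are sign-equivalent, but your integrating-factor function is globally monotone, which makes the ``at most one zero'' conclusion immediate and handles $\rho=0$ and $\rho>0$ uniformly. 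As a bonus you get the explicit threshold $(\rho-\lambda)\int_0^Tg_{21}>1$ characterizing $T^*>0$, which the paper only states asymptotically. I verified the identities and sign computations; the argument is complete.
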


\begin{proof}
From \eqref{de-Gi2}, \eqref{eq-g21} and \eqref{eq-g22}, we have the following computation,
$$
 g_{22}(t;\mathbb{B})
 =\int_t^T e^{-\rho(s-t)}g_{21}(s)
 \left[K(\mathbb{B})
 +f_2(g_{21}(s))\right]\mathrm{d}s
 =K(\mathbb{B})g_{23}(t)
 +\int_t^T e^{-\rho(s-t)}g_{21}(s)
 f_2(g_{21}(s))\,\mathrm{d}s,
$$
where
\bee\label{eq-g23}
 g_{23}(t):=\left\{
 \begin{array}{ll}
 {\lambda \over\rho^2}e^{-\rho(T-t)}\left[e^{\rho(T-t)}-1-\rho(T-t)\right]
 +e^{-\rho(T-t)}(T-t)>0,\;&\rho>0,
 \vspace{2mm}\\
 (T-t)\left[1+\frac{1}{2}\lambda(T-t)\right]>0,&\rho=0,
 \end{array}
 \right.\;\;\forall\;t\in[0,T).
\eee
So we have
$$
 g_{22}(t;\mathbb{B}_1)-g_{22}(t;\mathbb{B}_2)
 =[K(\mathbb{B}_1)-K(\mathbb{B}_2)]g_{23}(t).
$$
By  the following computation,
$$
 \frac{g_{23}(t)}{g_{21}(t)}=\left\{
 \begin{array}{ll}
 \frac{{1\over\rho}\left[\lambda e^{\rho(T-t)}+\left(\rho-\lambda\right)\right]-1
 +(\rho-\lambda)(T-t)}
 {\lambda e^{\rho(T-t)}+\left(\rho-\lambda\right)}
 ={1\over\rho}+\frac{(\rho-\lambda)(T-t)-1}
 {\lambda e^{\rho(T-t)}+\left(\rho-\lambda\right)},\;\;&\rho>0,
 \vspace{2mm}\\
 \frac{1+\frac{1}{2}\lambda(T-t)}{1+\lambda(T-t)}(T-t),&\rho=0,
 \end{array}
 \right.\;\;\forall\;t\in[0,T),
$$
together with \eqref{eq-Pi}, we obtain \eqref{eq-Pi3}.

In the case of $\lambda=0$, from Theorem \ref{th-la=0}, we know that $P_2$ is strictly decreasing with respect to $t$ if $K(\mathbb{B}_1)\neq K(\mathbb{B}_2)$. Next, we assume $\lambda>0$ and $K(\mathbb{B}_1)\neq K(\mathbb{B}_2)$. From \eqref{eq-K,eq-f} and $\mathbb{B}_2\subset \mathbb{B}_1$, we know that $K(\mathbb{B}_1)-K(\mathbb{B}_2)<0$. Hence, the monotonicity of $P_2$ with respect to $t$ is the same as that of $g_{23}(t)/g_{21}(t)$.

In the case of $\rho=0$, it is clear that
$$
 \frac{g_{23}(t)}{g_{21}(t)}=\frac{1+\frac{1}{2}\lambda(T-t)}{\frac{1}{T-t}+\lambda},\quad
 \left(1+\frac{1}{2}\lambda(T-t)\right)^\prime<0,\quad \left(\frac{1}{T-t}+\lambda\right)^\prime>0,\quad
 \frac{g_{23}(t)}{T-t},\frac{g_{21}(t)}{T-t}>0
$$
for any $t\in[0,T)$. So, we deduce that $g_{23}(t)/g_{21}(t)$ and $P_2$ are strictly decreasing with respect to $t$ if $K(\mathbb{B}_1)\neq K(\mathbb{B}_2)$.

In the case of $\rho>0$, we calculate that
$$
 \left(\frac{g_{23}(t)}{g_{21}(t)}\right)^\prime=
 \frac{-(\rho-\lambda)[\lambda e^{\rho(T-t)}+\left(\rho-\lambda\right)]
 +\lambda\rho e^{\rho(T-t)}[(\rho-\lambda)(T-t)-1]}
 {[\lambda e^{\rho(T-t)}+\left(\rho-\lambda\right)]^2}
 =\frac{\widetilde{g}_2(t)}
 {[\lambda e^{\rho(T-t)}+\left(\rho-\lambda\right)]^2},
$$
where
$$
 \widetilde{g}_2(t)=
 \lambda\rho e^{\rho(T-t)}(\rho-\lambda)(T-t) -(\rho-\lambda)^2-(2\rho-\lambda)\lambda e^{\rho(T-t)}.
$$
It is not difficult to check that
\bee
 \widetilde{g}\,_2(T)&=&-(\rho-\lambda)^2
 -(2\rho-\lambda)\lambda
 =-\rho^2<0,
 \\[2mm]
 \widetilde{g}\,_2^\prime(t)&=&-\lambda\rho^2e^{\rho(T-t)}(\rho-\lambda)(T-t)
 -\lambda\rho e^{\rho(T-t)}(\rho-\lambda)
 +(2\rho-\lambda)\lambda\rho e^{\rho(T-t)}
 \\[2mm]
 &=&-\lambda\rho^2e^{\rho(T-t)}(\rho-\lambda)(T-t)
 +\lambda\rho^2 e^{\rho(T-t)}
 =\lambda\rho^2 e^{\rho(T-t)}[1-(\rho-\lambda)(T-t)].
\eee

When $0<\rho\leq\lambda$, then we have $\widetilde{g}\,_2^\prime(t)>0$ and
$\widetilde{g}_2(t)<0$ for any $t\in[0,T]$. It leads to $g_{23}(t)/g_{21}(t)$ and $P_2$ are strictly decreasing with respect to $t$.

When $\rho>\lambda>0$, we have $\widetilde{g}\,_2^\prime(t)>0$ when $t\in(T-1/(\rho-\lambda),T]$ and $\widetilde{g}\,_2^\prime(t)<0$ when $t\in[0,T-1/(\rho-\lambda))$. It implies that $\widetilde{g}_2(t)$ is strictly decreasing  in $[0,T-1/(\rho-\lambda)]$ and strictly increasing in $[T-1/(\rho-\lambda),T]$.

Moreover, it is clear that when $T$ is large enough and $\rho>\lambda>0$, we have
$$
 \widetilde{g}_2(0)
 =\lambda\rho e^{\rho T}(\rho-\lambda)T -(\rho-\lambda)^2-(2\rho-\lambda)\lambda e^{\rho T}>0.
$$
So we know that there exists a $T^*<T$ such that $\widetilde{g}_2(t)>0$ if $[0,T^*)$, and $\widetilde{g}_2(t)<0$ if $(T^*,T]$. This leads to the result when $\rho>\lambda>0$ and $K(\mathbb{B}_1)\neq K(\mathbb{B}_2)$, i.e., $P_2$ is strictly decreasing with respect to $t$ in $[T^*,T]$, and strictly increasing with respect to $t$ in $[0,T^*]$.
\end{proof}

Next, we give a result under power utility assumption. In this case, the problem is more complex than that under logarithm utility assumption. In order to simplify the problem, we assume that there is no constraint on consumption , i.e., $\mathbb{A}^c=[\,0,+\infty)$. In fact, we can generalize the result into the case of $\mathbb{A}^c=[\,\underline{c},\overline{c}\,]$ with $0\leq \underline{c}\leq \overline{c}$ and achieve the similar results via the conclusions in \cite{YLZ}.

\begin{theorem}\label{th-po1}
Under the Assumption \ref{independent}, and $\mathbb{A}^c=[\,0,+\infty),\lambda>0$, then the price $P_1$ has the following explicit form of
\be\label{eq-Pi4}
 P_1(t,x;\mathbb{B}_1,\mathbb{B}_2)
 =x\left\{1-\left[\frac{g_{12}(t,\widetilde{K}(\mathbb{B}_1))}
 {g_{12}(t,\widetilde{K}(\mathbb{B}_2))}
 \right]^{\frac{1-p}{p}}\right\}
\ee
with
\be\label{eq-g12}
 g_{12}(t,y)&\!\!\!\!=\!\!\!\!&
 \left\{
 \begin{array}{ll}
 \lambda^{1\over p-1}e^{y (t-T)}
 -\frac{e^{y (t-T)}-1}{y},\quad
 &y\neq 0,\,\lambda>0,
 \vspace{2mm} \\
 \lambda^{1\over p-1}+(T- t),
 &y=0,\,\lambda>0,
 \end{array}
 \right.\qquad
 \widetilde{K}(\mathbb{B})={\rho-pK(\mathbb{B})\over 1-p},
\ee
where $K(\mathbb{B})$ is defined in \eqref{eq-K,eq-f}. Moreover, when $\rho-\max[p K(\mathbb{B}_1),p K(\mathbb{B}_2)]\leq (1-p)\lambda^{1\over 1-p}$ and $K(\mathbb{B}_1)\neq K(\mathbb{B}_2)$, $P_1$ is strictly decreasing with respect to $t$; whereas when $\rho-\max[p K(\mathbb{B}_1),p K(\mathbb{B}_2)]>(1-p)\lambda^{1\over 1-p}$ and $K(\mathbb{B}_1)\neq K(\mathbb{B}_2)$, there exists a $T^*\geq 0$ such that $P_1$ is  strictly  decreasing with respect to $t$ in $[T^*,T]$, and  strictly increasing with respect to $t$ in $[0,T^*]$.
\end{theorem}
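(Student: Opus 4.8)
The plan is to first extract the closed form \eqref{eq-Pi4} from Lemma~\ref{lemma3.1}, and then read off the $t$-monotonicity from the sign of a single scalar function.

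First I would compute the consumption part $f_1$. Since $\mathbb{A}^c=[0,+\infty)$ and $\lambda>0$, maximizing $F^2_1(x_q;x_c)=\frac{\lambda}{p}e^{-x_q}x_c^{p}-x_c$ over $x_c\ge 0$ gives the interior maximizer $x_c^*=(\lambda e^{-x_q})^{1/(1-p)}$, hence $f_1(x_q)=\frac{1-p}{p}\lambda^{1/(1-p)}e^{-x_q/(1-p)}$ and, by \eqref{de-Gi2}, $G_1(x_q;\mathbb{B})=K(\mathbb{B})+\frac{1-p}{p}\lambda^{1/(1-p)}e^{-x_q/(1-p)}$. Inserting this into \eqref{eq-g1} and differentiating in $t$, the function $g_1(\cdot;\mathbb{B})$ is the unique solution of the terminal-value problem $g_1'(t)=\rho-pK(\mathbb{B})-(1-p)\lambda^{1/(1-p)}e^{-g_1(t)/(1-p)}$, $g_1(T)=0$. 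The change of unknown $g_1(t)=(1-p)\ln v(t)+\ln\lambda$ linearizes this to $v'(t)=\widetilde{K}(\mathbb{B})\,v(t)-1$, $v(T)=\lambda^{1/(p-1)}$, and a direct check shows that $g_{12}(t,\widetilde{K}(\mathbb{B}))$ defined in \eqref{eq-g12} solves exactly this linear problem (both the $y\neq 0$ and $y=0$ branches satisfy $\partial_t g_{12}=\widetilde{K}g_{12}-1$ with $g_{12}(T,\widetilde{K})=\lambda^{1/(p-1)}$), so $v(t)=g_{12}(t,\widetilde{K}(\mathbb{B}))$. Therefore $g_1(t;\mathbb{B}_1)-g_1(t;\mathbb{B}_2)=(1-p)\ln[\,g_{12}(t,\widetilde{K}(\mathbb{B}_1))/g_{12}(t,\widetilde{K}(\mathbb{B}_2))\,]$, and substituting into \eqref{eq-Pi} yields \eqref{eq-Pi4}.

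For the monotonicity, set $\phi(t):=g_{12}(t,\widetilde{K}(\mathbb{B}_1))$ and $\psi(t):=g_{12}(t,\widetilde{K}(\mathbb{B}_2))$; from the explicit formula both are positive and smooth on $[0,T]$, with $\phi(T)=\psi(T)=\lambda^{1/(p-1)}$. As in the proof of Theorem~\ref{th-la=0}, $K(\cdot)$ is non-increasing, so $K(\mathbb{B}_1)\le K(\mathbb{B}_2)$, and when $K(\mathbb{B}_1)\neq K(\mathbb{B}_2)$ the quantity $\widetilde{K}(\mathbb{B}_1)-\widetilde{K}(\mathbb{B}_2)=p(K(\mathbb{B}_2)-K(\mathbb{B}_1))/(1-p)$ has the sign of $p$. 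Differentiating \eqref{eq-Pi4} and using $\partial_t g_{12}/g_{12}=y-1/g_{12}$, one gets $\mathrm{sgn}(\partial_t P_1)=-\mathrm{sgn}(p\,L'(t))$ where $L(t):=\ln\phi(t)-\ln\psi(t)$ and $L'(t)=(\widetilde{K}(\mathbb{B}_1)-\widetilde{K}(\mathbb{B}_2))-1/\phi(t)+1/\psi(t)$; in particular $L'(T)=\widetilde{K}(\mathbb{B}_1)-\widetilde{K}(\mathbb{B}_2)$, so $pL'(T)>0$ and $P_1$ is always strictly decreasing near $t=T$. The pivotal computation is at a zero of $L'$: using $(1/g_{12})'=(1/g_{12})(1/g_{12}-y)$, at any $t^*$ with $L'(t^*)=0$ one has $1/\phi(t^*)-\widetilde{K}(\mathbb{B}_1)=1/\psi(t^*)-\widetilde{K}(\mathbb{B}_2)=:\delta$ and $(L')'(t^*)=-\delta\,(\widetilde{K}(\mathbb{B}_1)-\widetilde{K}(\mathbb{B}_2))$; moreover the explicit form of $g_{12}$ gives $\mathrm{sgn}(1/g_{12}(t,y)-y)=\mathrm{sgn}(\lambda^{1/(1-p)}-y)$ for every $t\in[0,T]$.

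Combining these, a zero of $L'$ forces $\widetilde{K}(\mathbb{B}_1)$ and $\widetilde{K}(\mathbb{B}_2)$ to lie strictly on the same side of $\lambda^{1/(1-p)}$; and since, dividing the threshold inequality by $1-p>0$, the condition $\rho-\max[pK(\mathbb{B}_1),pK(\mathbb{B}_2)]\le(1-p)\lambda^{1/(1-p)}$ is precisely $\min(\widetilde{K}(\mathbb{B}_1),\widetilde{K}(\mathbb{B}_2))\le\lambda^{1/(1-p)}$ (and likewise with $>$), the two regimes split as follows. In the first regime either the two values are on opposite sides (so $L'$ has no zero at all) or both are below $\lambda^{1/(1-p)}$, in which case $\delta>0$ and $\mathrm{sgn}((L')'(t^*))=-\mathrm{sgn}(p)$ at every zero; taking $t^*$ to be the largest zero in $[0,T)$ and comparing the sign of $(L')'(t^*)$ (forced to $-\mathrm{sgn}(p)$) with $\mathrm{sgn}(L'(T))=\mathrm{sgn}(p)$ via the one-sided difference quotient of $L'$ at $t^*$ produces a contradiction, so $L'$ has no zero, keeps the sign of $L'(T)$, and $P_1$ is strictly decreasing on $[0,T]$. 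In the second regime both values exceed $\lambda^{1/(1-p)}$, so $\delta<0$ and $\mathrm{sgn}((L')'(t^*))=\mathrm{sgn}(p)$ at every zero; this fixed sign allows $L'$ at most one zero $T^*\in[0,T)$, with $pL'<0$ on $[0,T^*)$ and $pL'>0$ on $(T^*,T]$, i.e. $P_1$ is strictly increasing on $[0,T^*]$ and strictly decreasing on $[T^*,T]$, with $T^*=0$ if no zero occurs. Finally, for $T$ large an elementary estimate shows $L'(0)$ is dominated by the exponentially small term carrying $e^{-\min(\widetilde{K}(\mathbb{B}_1),\widetilde{K}(\mathbb{B}_2))T}$, whose sign is opposite to that of $L'(T)$, so $L'$ changes sign and $T^*>0$, as recorded in the footnote.

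The first paragraph is routine (one optimization and two ODE verifications), as is the large-$T$ asymptotics. The main obstacle is the last step: turning the purely local fact that $(L')'$ has a forced sign at every zero of $L'$ into the global sign pattern of $L'$, while simultaneously matching the dichotomy $\min(\widetilde{K}(\mathbb{B}_1),\widetilde{K}(\mathbb{B}_2))\lessgtr\lambda^{1/(1-p)}$ to the stated regime condition and keeping the signs consistent across $p\in(0,1)$ and $p<0$. This monotonicity-via-critical-points argument plays here the role that the explicit auxiliary function $\widetilde{g}_2$ played in the logarithmic case of Theorem~\ref{th-lo1}.
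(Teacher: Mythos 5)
Your derivation of \eqref{eq-Pi4} is correct and essentially the paper's: you obtain the same ODE \eqref{ODE-g1} for $g_1$ and solve it (via the linearizing substitution $g_1=(1-p)\ln v+\ln\lambda$, which is just a repackaging of the paper's direct quadrature), arriving at the same expression \eqref{de-g1-2}. The monotonicity half, however, takes a genuinely different route. The paper works with $\Delta g_1=(pg_1^1-pg_1^2)'$, writes the first-order linear ODE \eqref{ODE-dg1} $\Delta g_1'=a\Delta g_1+b$ with $\Delta g_1(T)=p^2(K_2-K_1)>0$, controls $\mathrm{sgn}\,b$ through \eqref{der-g1} and \eqref{com-g1-g2}, and invokes the comparison theorem (or the explicit variation-of-constants formula) to pin down the sign of $\Delta g_1$ on $[0,T]$ (regime 1) or on $[0,T^*]$ (regime 2). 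You instead analyze $L'=(g_1^1-g_1^2)'/(1-p)$ — the same object up to the constant $p(1-p)$ — by a critical-point argument: the identity $(L')'(t^*)=-\delta(\widetilde K_1-\widetilde K_2)$ at any zero $t^*$, combined with the fact that $\mathrm{sgn}(1/g_{12}(t,y)-y)=\mathrm{sgn}(\lambda^{1/(1-p)}-y)$ is constant in $t$ (your equivalent of \eqref{der-g1}), forces a sign on $(L')'$ at every zero, and the largest-zero contradiction (regime 1) or the at-most-one-upcrossing argument (regime 2) then yields the global sign pattern. Both proofs rest on the same structural input and the same large-$T$ asymptotics for the footnote; yours avoids the comparison theorem entirely and gives the at-most-one-zero structure in regime 2 for free, at the cost of a slightly longer contradiction argument in regime 1. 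I checked the pivotal identities ($L'(T)=\widetilde K_1-\widetilde K_2$, $(1/g_{12})'=(1/g_{12})(1/g_{12}-y)$, the translation of the threshold into $\min(\widetilde K_1,\widetilde K_2)\lessgtr\lambda^{1/(1-p)}$, and the sign of the dominant exponential in $L'(0)$ for both $p\in(0,1)$ and $p<0$) and they are all correct.

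One small point to tighten: in regime 1 your dichotomy "opposite sides or both strictly below $\lambda^{1/(1-p)}$" omits the boundary case where one of $\widetilde K_1,\widetilde K_2$ equals $\lambda^{1/(1-p)}$ exactly. This is harmless — if, say, $\widetilde K_1=\lambda^{1/(1-p)}$ then $g_{12}(\cdot,\widetilde K_1)\equiv 1/\widetilde K_1$ is the equilibrium, so $1/\phi-\widetilde K_1\equiv 0$ while $1/\psi-\widetilde K_2$ has a fixed nonzero sign (as $K_1\neq K_2$), and $L'$ again has no zero — but the case should be stated, since there $\delta=0$ and your second-derivative sign argument degenerates.
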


\begin{proof}
From \eqref{de-Gi2}, \eqref{eq-K,eq-f} and \eqref{eq-F2}, we know that
$$
 G_1(x_q;\mathbb{B})
 =K(\mathbb{B})+\max\limits_{x_c\in[0,+\infty)}
 \left({\lambda\over p}e^{-x_q}x_c^{p}-x_c\right)
 =K(\mathbb{B})+\frac{1-p}{p}e^{\frac{x_q}{p-1}}
 \lambda^{\frac{1}{1-p}}.
$$
 By \eqref{eq-g1}, we know $g_1$ satisfies the following ordinary differential equation (ODE),
\be\label{ODE-g1}
 g^\prime_1(t;\mathbb{B})=
 -\Big[\,pK(\mathbb{B})+(1-p)\lambda^{\frac{1}{1-p}}
 e^{\frac{g_1(t;\mathbb{B})}{p-1}}-\rho\Big],
 \quad g_1(T;\mathbb{B})=0.
\ee
Solving the above ODE, we deduce that
\be\label{de-g1-2}
 g_1(t;\mathbb{B})&\!\!\!\!=\!\!\!\!&
 \left\{
 \begin{array}{ll}
 \ln\lambda+(1-p)\ln\left\{\,\left[\,\lambda^{1\over p-1}
 -{1-p\over\rho-pK(\mathbb{B})}\,\right]
 e^{{\rho-pK(\mathbb{B})\over 1-p}
 \left(\,t-T\,\right)}+{1-p\over\rho-pK(\mathbb{B})}\,\right\},
 &\rho\neq pK(\mathbb{B}),\,\lambda>0,
 \vspace{2mm} \\
 \ln\lambda+(1-p)\ln\left[\,\lambda^{1\over p-1}+(T- t)\,\right],
 &\rho=pK(\mathbb{B}),\,\lambda>0.
 \end{array}
 \right.
\ee
Combining \eqref{eq-Pi} and the following computation,
$$
 g_1(t;\mathbb{B}_1)-g_1(t;\mathbb{B}_2)
 =(1-p)\ln g_{12}(t,\widetilde{K}(\mathbb{B}_1))
 -(1-p)\ln g_{12}(t,\widetilde{K}(\mathbb{B}_2)),
$$
we know \eqref{eq-Pi4} holds true, where $g_{12}(t,y)$ and $\widetilde{K}(\mathbb{B})$ are defined in \eqref{eq-g12}.

Next, we investigate the monotonicity of price $P_1$ with respect to $t$ by utilizing \eqref{eq-Pi} and analyzing the monotonic behavior of the $pg_1(t;\mathbb{B}_1)-pg_1(t;\mathbb{B}_2)$. For the convenience of the next discussion, we temporarily denote
$$
 K_i=K(\mathbb{B}_i),\quad \widetilde{K}_i=\widetilde{K}(\mathbb{B}_i),\quad
 g_1^i=g_1(\cdot,\mathbb{B}_i),\;\;i=1,2.
$$
From \eqref{eq-K,eq-f} and $\mathbb{B}_2\subset \mathbb{B}_1$, we know that $K_1\leq K_2$. Next, we assume $\lambda>0$ and $K_1<K_2$.

We first prove that $pg_1^1(t)< pg_1^2(t)$ for any $t\in[0,T)$. It is not difficult to check that
$$
 \partial_y g_{12}(t,y)
 =\lambda^{1\over p-1}e^{y (t-T)}(t-T)
 +e^{y (t-T)}\frac{1-e^{-y (t-T)}-y(t-T)}{y^2}<0,\;\forall\;t\in[0,T),
$$
where we have used the fact that $1-e^x+x\leq0$. So from \eqref{eq-Pi4} and \eqref{eq-Pi}, we know that \footnote{In fact, we can deduce $pg_1^1<pg_1^2$ from \eqref{de-g1-2}, too.}
\be\label{com-g1-g2}
 pg_{12}(t,\widetilde{K}(\mathbb{B}_1))<pg_{12}(t,\widetilde{K}(\mathbb{B}_2)),\quad
 P_1(t,x;\mathbb{B}_1,\mathbb{B}_2)>0,\quad
 pg_1^1(t)< pg_1^2(t),\quad \forall\;x>0,t\in[0,T).
\ee

Next, We study the monotonicity of $(p g_1^1-p g_1^2)(t)$. In the first, by \eqref{de-g1-2}, we calculate that
$$
 g_1^\prime(t;\mathbb{B})=\left\{
 \begin{array}{ll}
 \left[\,\rho-pK(\mathbb{B})
 -(1-p)\lambda^{1\over 1-p}\,\right]\frac{\lambda^{1\over p-1}
 e^{{\rho-pK(\mathbb{B})\over 1-p}
 \left(\,t-T\,\right)}}{\,\left[\,\lambda^{1\over p-1}
 -{1-p\over\rho-pK(\mathbb{B})}\,\right]
 e^{{\rho-pK(\mathbb{B})\over 1-p}
 \left(\,t-T\,\right)}+{1-p\over\rho-pK(\mathbb{B})}},\quad &\rho\neq pK(\mathbb{B}),
 \vspace{2mm} \\
 (1-p)\frac{-1}{\lambda^{1\over p-1}+(T-t)}<0,\quad &\rho= pK(\mathbb{B}).
 \end{array}
 \right.
$$
So, we deduce that
\be\label{der-g1}
  g_1^\prime(t;\mathbb{B})\;\;\left\{
 \begin{array}{ll}
 >0,
 &\rho-pK(\mathbb{B})>(1-p)\lambda^{1\over 1-p},
 \vspace{2mm} \\
 \leq 0,
 &\rho-pK(\mathbb{B})\leq(1-p)\lambda^{1\over 1-p},
 \end{array}
 \right.
\ee

Secondly, from \eqref{ODE-g1}, we know that
\be\label{eq-T}
 (pg_1^1-pg_1^2)^\prime=p^2(K_2-K_1)
 -(1-p)p\lambda^{\frac{1}{1-p}}
 \left[e^{\frac{g_1^1}{p-1}}
 -e^{\frac{g_1^2}{p-1}}\right],\;\;
 (pg_1^1-pg_1^2)^\prime(T)=p^2(K_2-K_1)>0.
\ee
Thus we have
\be\label{ODE-dg1}
 \Delta g_1^{\prime}(t)
 =\left(pg_1^1-pg_1^2\right)^{\prime\prime}(t)
 =p\lambda^{\frac{1}{1-p}}
 \left[e^{\frac{g_1^1}{p-1}}(g_1^1)^\prime(t)
 -e^{\frac{g_1^2}{p-1}}(g_1^2)^\prime(t)\right]
 =a(t)\Delta g_1(t)+b(t),\,
 \Delta g_1(T)>0,
\ee
where
\be\label{eq-a-b}
 \Delta g_1:=\left(pg_1^1-pg_1^2\right)^{\prime},\qquad
 a(t)=\lambda^{\frac{1}{1-p}}
 e^{\frac{g_1^i(t)}{p-1}},\qquad
 b(t)=p\lambda^{\frac{1}{1-p}}
 \left[e^{\frac{g_1^1(t)}{p-1}}
 -e^{\frac{g_1^2(t)}{p-1}}\right](g_1^{3-i})^\prime(t)
\ee
with $i=1$ if $\rho-p K_2\leq(1-p)\lambda^{1\over 1-p}$, and $i=2$ if $\rho-p K_2>(1-p)\lambda^{1\over 1-p}$.

In the case of $\rho-\max(p K_1,p K_2)\leq (1-p)\lambda^{1\over 1-p}$, without loss of generalization, we assume that $\rho-p K_2\leq (1-p)\lambda^{1\over 1-p}$,  let $i=1$ in \eqref{eq-a-b}, from \eqref{der-g1} we know that $(g_1^2)^\prime(t)\leq0$, then $b(t)\leq0$ by \eqref{com-g1-g2}. Applying the comparison theory for ODE \eqref{ODE-dg1}, we know that $\Delta g_1(t)>0$ for any $t\in[0,T]$.\footnote{ In fact, we can directly solve ODE \eqref{ODE-dg1}, and obtain
\bee
 \Delta g_1(t)
 &=&\exp\left[-\int_t^T a(s){\rm d}s\right]\Delta g_1(T)
 -\int_t^T b(u)\exp\left[-\int_t^u a(s){\rm d}s\right] {\rm d}u
 \\[2mm]
&=& p^2(K_2-K_1)\exp\left[-\int_t^T a(s){\rm d}s\right]
 -\int_t^T b(u)\exp\left[-\int_t^u a(s){\rm d}s\right] {\rm d}u,
\eee which implies the same result, too.} So we deduce that $p[g_1(t;\mathbb{B}_1)-g_1(t;\mathbb{B}_2)]$ is strictly increasing with respect to $t$, and $P_1$ is strictly decreasing with respect to $t$ in $[0,T]$ via \eqref{eq-Pi}.

In the case of $\rho-\max(p K_1,p K_2)>(1-p)\lambda^{1\over 1-p}$,  let $i=2$ in \eqref{eq-a-b}, from \eqref{der-g1} and \eqref{com-g1-g2}, we know that $(g_1^1)^\prime>0$ and $b(t)>0$ for any $t\in[0,T)$. Since \eqref{eq-T} and
\bee
 \Delta g_1(0)&=&
 \frac{p\left[\,(\rho-pK_1)\lambda^{1\over p-1}-(1-p)\,\right]
 e^{-{\rho-pK_1\over 1-p}T}}{\,\left[\,\lambda^{1\over p-1}
 -{1-p\over\rho-pK_1}\,\right]
 e^{-{\rho-pK_1\over 1-p}T}+{1-p\over\rho-pK_1}}
 -\frac{p\left[\,(\rho-pK_2)\lambda^{1\over p-1}-(1-p)\,\right]
 e^{-{\rho-pK_2\over 1-p}T}}{\,\left[\,\lambda^{1\over p-1}
 -{1-p\over\rho-pK_2}\,\right]
 e^{-{\rho-pK_2\over 1-p}T}+{1-p\over\rho-pK_2}}
 \\[2mm]
 &=&p e^{{pK_1-\rho\over 1-p}T}\left\{\frac{\left[\,(\rho-pK_1)\lambda^{1\over p-1}-(1-p)\,\right]}
 {{1-p\over\rho-pK_1}+o(1)}
 -\frac{\left[\,(\rho-pK_2)\lambda^{1\over p-1}-(1-p)\,\right]
 }{{1-p\over\rho-pK_2}+o(1)}e^{{p(K_2-K_1)\over 1-p}T}\right\}
 \\[2mm]
 &\rightarrow&0^-,
\eee
as $T\rightarrow+\infty$. So, we know that $\Delta g_1(0)<0$ provided $T$ large enough. Next, we assume that $T$ is enough large, and $\Delta g_1(0)<0$.

Define $T^*=\sup\{t\in[0,T]:\Delta g_1(t)\leq 0\}\in(0,T)$. It is clear that $\Delta g_1\in C^\infty[0,T],\Delta g_1(T^*)=0$ and $\Delta g_1(t)>0$ for any $t\in(T^*,T]$. Moreover, applying the comparison theory for ODE \eqref{ODE-dg1} in the interval $[0,T^*]$, we know that $\Delta g_1(t)<0$ for any $t\in[0,T^*)$.\footnote{ In fact, as above, we can directly solve ODE \eqref{ODE-dg1} in the interval $[0,T^*]$ and achieve the expression of $\Delta g_1$ in $[0,T^*]$, which implies that $\Delta g_1(t)<0$ for any $t\in[0,T^*)$, too.} Combining \eqref{eq-Pi}, we deduce that  $P_1$ is  strictly  decreasing with respect to $t$ in $[T^*,T]$, and  strictly increasing with respect to $t$ in $[0,T^*]$.

\end{proof}

\begin{remark}\label{rhola1}
From Theorems \ref{th-lo1} and \ref{th-po1}, we know that taking consumption into account complicates the analysis, so that the price of data asset isn't monotonic w.r.t. the time $t$ any more.  This is because the investor must balance and allocate his resources between investment and immediate consumption. When the utility discount is not significant, investor tends to focus more on the future, thus leaning more towards investment. Conversely, when the utility discount is substantial, investor prioritises immediate consumption. The value of information is crucial for the investment, hence when the utility discount is not significant, the value of information is greater, and thus the price decreases over time t. On the other hand, when the utility discount is substantial, the value of information relatively diminishes, making the issue more complex, the price of the data asset initially increases and then decreases over time $t$. From the perspective of the seller, they could choose different sales timings for different customers.
\end{remark}

In this section, we know $K(\mathbb{B}_i)$ plays an essential role in the data asset price, so we would like to figure out the explicit expressions of $K(\mathbb{B}_i)$ under specific conditions in the next section.

\section{More specific examples}

In this section, we  will look further into more specific examples and present more concrete results. Thanks to Theorem \ref{th-la=0}, Theorem \ref{th-lo1} and  Theorem \ref{th-po1}, we know that when $\lambda=0$, $P_1,P_2$ are defined by \eqref{eq-Pi2}; $P_1$ is defined by \eqref{eq-Pi4}  when $\lambda>0, \underline{c}=0, \overline{C}=+\infty$, and $P_2$ is defined by \eqref{eq-Pi3} when $\lambda>0$. Hence it is sufficient to find the constant $K(\mathbb{B}_i), i=1,2$ to obtain the explicit expressions of the data asset price. Next we will show the specific expressions $K(\mathbb{B}_i), i=1,2$ in various examples.

\begin{assumption}\label{no correlation}
Assume that $n=1$, and $\mathbb{A}=[\,\underline{\pi},\overline{\pi}\,]
\times[\,\underline{c},\overline{c}\,]$, and $\mathbb{B}_i=[\,\underline{\mu}_i,\overline{\mu}_i\,]\times
[\,\underline{\sigma}_i^2,\overline{\sigma}_i^2\,]$, where
$\underline{\pi},
\overline{\pi},\underline{c},\overline{c},\underline{\mu}_i,
\overline{\mu}_i,\underline{\sigma}_i,\overline{\sigma}_i$ are constants
satisfying $-\infty\leq \underline{\pi}\leq 0,1\leq \overline{\pi}\leq+\infty,\,
0\leq \underline{c}\leq\overline{c}\leq +\infty$, and $-\infty<\underline{\mu}_1\leq \underline{\mu}_2\leq \overline{\mu}_2\leq \overline{\mu}_1<+\infty,\,
0<\underline{\sigma}_1\leq\underline{\sigma}_2\leq
\overline{\sigma}_2\leq\overline{\sigma}_1<+\infty,i=1,2$.
\end{assumption}

\begin{theorem}\label{th-ex1}
Under Assumption \ref{no correlation}, the $K(\mathbb{B}_i)$ takes the form in Table \ref{table1}, where
\bee
  \beta_{i,1}{:=}{\underline{\mu}_i-R\over (1-p)\overline{\sigma}^2_i},\qquad\qquad
  \beta_{i,2}{:=}{\underline{\mu}_i-r\over (1-p)\overline{\sigma}^2_i},\qquad\qquad
  \beta_{i,3}{:=}{\overline{\mu}_i-r\over (1-p)\overline{\sigma}^2_i},\hspace{1.7cm}
  \\[2mm]
  K^1_{i,1}:=\frac{p-1}{2}\overline\pi^2\overline\sigma_i^2
  +(\underline\mu_i-R)\overline\pi+R,\qquad
  K^1_{i,2}:={(\underline{\mu}_i-R)^2\over 2(1-p)\overline{\sigma}_i^2}+R,\qquad
  K^1_{i,3}:=\frac{p-1}{2}\overline\sigma_i^2+\underline{\mu}_i,\;\,
  \\[2mm]
  \quad K^1_{i,4}:={(\underline{\mu}_i-r)^2\over 2(1-p)\overline{\sigma}_i^2}+r,\quad
  K^1_{i,5}:=r,\quad
  K^1_{i,6}:={(\overline{\mu}_i-r)^2\over 2(1-p)\overline{\sigma}_i^2}+r,\quad
  K^1_{i,7}:=\frac{p-1}{2}\underline\pi^2\overline\sigma_i^2
  +(\overline\mu_i-r)\underline\pi+r.
\eee

\begin{table}[h!]
    \caption{the constant $K(\mathbb{B}_i),\;\;i=1,2$}
    \ \ \ \ \ \ \ \ \ \ \ \
    \begin{tabular}{c c c c c c c c}
      \hline
      $\!\!\!$& $\!\!\!\!\!\beta_{i,1}\geq \overline{\pi}\!\!$& $\!\!1\leq \beta_{i,1}\leq \overline{\pi}\!\!$ &$\!\!\beta_{i,1}\leq1\leq\beta_{i,2}\!\!$ & $\!\!0\leq\beta_{i,2}\leq1\!\!$
      & $\!\!\beta_{i,2}\leq 0\leq \beta_{i,3}\!\!$& $\!\!\underline{\pi}\leq\beta_{i,3}\leq 0\!\!$& $\!\!\beta_{i,3}\leq \underline{\pi} $\\[1mm]
      \hline
      $\!\!\!K(\mathbb{B}_i)$ &$\;\;K^1_{i,1}$ &$K^1_{i,2}$ & $K^1_{i,3}$ & $K^1_{i,4}$ & $K^1_{i,5}$ & $K^1_{i,6}$& $K^1_{i,7}$\\[1mm]
      \hline
    \end{tabular}  \label{table1}
  \end{table}
\end{theorem}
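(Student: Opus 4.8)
The goal is to compute the constant
$$
K(\mathbb{B}_i)=\sup_{x_\pi\in[\underline\pi,\overline\pi]}\ \inf_{(x_\mu,x_\Sigma)\in\mathbb{B}_i}F^1(x_\pi;x_\mu,x_\Sigma),
$$
where in the one-dimensional case $F^1(x_\pi;x_\mu,x_\sigma^2)=\tfrac{p-1}{2}x_\pi^2 x_\sigma^2+x_\mu x_\pi+r(1-x_\pi)^+-R(1-x_\pi)^-$, and $\mathbb{B}_i=[\underline\mu_i,\overline\mu_i]\times[\underline\sigma_i^2,\overline\sigma_i^2]$. The plan is to first carry out the inner infimum explicitly as a function of $x_\pi$, then maximize the resulting concave function over the interval $[\underline\pi,\overline\pi]$, and finally read off which of the seven formulas is active according to where the unconstrained maximizer falls relative to the breakpoints $\underline\pi,0,1,\overline\pi$.

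First I would handle the inner minimization. Since $p<1$ we have $p-1<0$, so the term $\tfrac{p-1}{2}x_\pi^2 x_\sigma^2$ is decreasing in $x_\sigma^2$; hence for every fixed $x_\pi$ the infimum over $x_\sigma^2\in[\underline\sigma_i^2,\overline\sigma_i^2]$ is attained at $x_\sigma^2=\overline\sigma_i^2$ (the worst-case volatility is the largest one). The term $x_\mu x_\pi$ is linear in $x_\mu$, so the infimum over $x_\mu\in[\underline\mu_i,\overline\mu_i]$ is attained at $x_\mu=\underline\mu_i$ when $x_\pi\ge0$ and at $x_\mu=\overline\mu_i$ when $x_\pi\le0$. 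Substituting, the inner value becomes a piecewise-defined function of $x_\pi$: for $x_\pi\in[0,1]$ it equals $\tfrac{p-1}{2}x_\pi^2\overline\sigma_i^2+(\underline\mu_i-r)x_\pi+r$; for $x_\pi\ge1$ it equals $\tfrac{p-1}{2}x_\pi^2\overline\sigma_i^2+(\underline\mu_i-R)x_\pi+R$; and for $x_\pi\le0$ it equals $\tfrac{p-1}{2}x_\pi^2\overline\sigma_i^2+(\overline\mu_i-r)x_\pi+r$. Each branch is a strictly concave quadratic in $x_\pi$, and one checks the three branches match continuously (and in fact $C^1$) at $x_\pi=0$ and $x_\pi=1$, so the whole function $x_\pi\mapsto\inf_{\mathbb{B}_i}F^1$ is continuous, piecewise-quadratic, and strictly concave on $\mathbb{R}$.

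Next I would maximize this concave function over $x_\pi\in[\underline\pi,\overline\pi]$. Since it is strictly concave, the maximizer is unique, and it is the point where the derivative vanishes if that point lies in $[\underline\pi,\overline\pi]$, otherwise the nearer endpoint. On the branch $x_\pi\ge1$ the stationary point is $x_\pi=\tfrac{\underline\mu_i-R}{(1-p)\overline\sigma_i^2}=\beta_{i,1}$; on the branch $x_\pi\in[0,1]$ the stationary point on the "$\underline\mu$" side is $\tfrac{\underline\mu_i-r}{(1-p)\overline\sigma_i^2}=\beta_{i,2}$ and on the "$\overline\mu$" side is $\tfrac{\overline\mu_i-r}{(1-p)\overline\sigma_i^2}=\beta_{i,3}$; on the branch $x_\pi\le0$ the stationary point is again $\beta_{i,3}$. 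The ordering $\beta_{i,1}\le\beta_{i,2}$ (from $R\ge r$) and $\beta_{i,2}\le\beta_{i,3}$ (from $\underline\mu_i\le\overline\mu_i$) guarantees that exactly one of the seven mutually exclusive cases in the table column headings holds, and in each case I would substitute the optimal $x_\pi$ (either a $\beta$-value or an endpoint $\underline\pi,\overline\pi$, $0$, or $1$) back into the appropriate branch to obtain $K^1_{i,1},\dots,K^1_{i,7}$; e.g. if $1\le\beta_{i,1}\le\overline\pi$ the maximizer is $\beta_{i,1}$ on the borrowing branch, giving $K^1_{i,2}=\tfrac{(\underline\mu_i-R)^2}{2(1-p)\overline\sigma_i^2}+R$, and if $\beta_{i,2}\le0\le\beta_{i,3}$ the maximizer is $x_\pi=0$ (the stationary points of both adjacent branches point away from $0$), giving $K^1_{i,5}=r$. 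The main obstacle is purely bookkeeping: one must verify that the case conditions in the table headings are exhaustive and that in each case the candidate point is indeed the global maximizer of the full piecewise function (not just of the local branch), which follows from strict concavity together with the inequalities among the $\beta_{i,j}$'s and the assumed inequalities $\underline\pi\le0\le1\le\overline\pi$ — this compatibility check, together with the routine substitutions, completes the proof. $\ \rule{0.5em}{0.5em}$
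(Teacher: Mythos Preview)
Your proposal is correct in substance and follows essentially the same route as the paper: the paper simply invokes Theorem~4.2 of \cite{YLZ} to read off the saddle point $(x_{\pi,i}^*,x_{\mu,i}^*,\overline\sigma_i)$ in each of the seven regimes and then substitutes into $F^1$, whereas you carry out that same saddle-point computation explicitly (inner min over $\mathbb{B}_i$, then outer max of a concave piecewise quadratic over $[\underline\pi,\overline\pi]$). The self-contained version you give is a faithful expansion of what the citation covers.

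One minor slip to correct: the inner value function $x_\pi\mapsto\inf_{\mathbb{B}_i}F^1$ is \emph{not} $C^1$ at $x_\pi=0$ or $x_\pi=1$ in general. At $x_\pi=0$ the one-sided derivatives are $\overline\mu_i-r$ (left) and $\underline\mu_i-r$ (right), and at $x_\pi=1$ they are $(p-1)\overline\sigma_i^2+(\underline\mu_i-r)$ (left) and $(p-1)\overline\sigma_i^2+(\underline\mu_i-R)$ (right); these differ whenever $\overline\mu_i>\underline\mu_i$ or $R>r$. This does not damage your argument --- at each kink the left derivative dominates the right, so strict concavity persists, and your treatment of the cases $K^1_{i,3}$ and $K^1_{i,5}$ already (correctly) identifies the kink as the maximizer precisely when zero lies between the two one-sided derivatives --- but the parenthetical ``and in fact $C^1$'' should be deleted, and the sentence ``the maximizer is the point where the derivative vanishes'' should be phrased in terms of the subdifferential containing zero.
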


\begin{proof}From \eqref{eq-K,eq-f} and Theorem 4.2 in \cite{YLZ}, we know that
$$
 K(\mathbb{B}_i)=\sup\limits_{x_\pi\in[\,\underline{\pi},\overline{\pi}\,]}
 \inf\limits_{(x_\mu,x_\Sigma)\in[\,\underline{\mu}_i,\overline{\mu}_i\,]\times
[\,\underline{\sigma}_i^2,\overline{\sigma}_i^2\,]}
 F^1(x_\pi;x_\mu,x_\Sigma)=F^1(x_{\pi,i}^*;x_{\mu,i}^*,x^*_{\Sigma,i}),
$$
where  $x^*_{\Sigma,i}=\overline{\sigma}_i$, and $x_{\pi,i}^*,x_{\mu,i}^*$ takes the form in Table \ref{table2}. So it is not difficult to obtain the conclusions of $K(\mathbb B_i)$ by \eqref{eq-F1}.\end{proof}

\begin{table}[h!]
    \caption{the values of $x_{\pi,i}^*$ and $x_{\mu,i}^*, i=1,2$}
    \ \ \ \ \ \ \ \ \ \ \ \
    \begin{tabular}{c c c c c c c c}
      \hline
      $\!\!\!$& $\!\!\!\!\!\beta_{i,1}\geq \overline{\pi}\!\!$& $\!\!1\leq \beta_{i,1}\leq \overline{\pi}\!\!$ &$\!\!\beta_{i,1}\leq1\leq\beta_{i,2}\!\!$ & $\!\!0\leq\beta_{i,2}\leq1\!\!$
      & $\!\!\beta_{i,2}\leq 0\leq \beta_{i,3}\!\!$& $\!\!\underline{\pi}\leq\beta_{i,3}\leq 0\!\!$& $\!\!\beta_{i,3}\leq \underline{\pi} $\\[1mm]
      \hline
      $\!\!\!x_{\pi,i}^*$ &$\;\;\overline{\pi}$ &$\beta_{i,1}$ & $1$ & $\beta_{i,2}$ & $0$ & $\beta_{i,3}$& $\underline{\pi}$\\[1mm]
      \hline
      $\!\!\!x_{\mu,i}^*$ &$\;\;\underline{\mu}_i$ &$\underline{\mu}_i$ & $\underline{\mu}_i$ & $\underline{\mu}_i$ & $r$ & $\overline{\mu}_i$& $\overline{\mu}_i$\\[1mm]
      \hline
    \end{tabular}  \label{table2}
\end{table}

\begin{remark}
The above Theorem implies that not all data assets hold value, only those of sufficient quality are valuable. For instance, if the estimation based on a particular data asset results in
$\underline{\mu}_2\leq r\leq \overline{\mu}_2$, then we have $\beta_{i,2}\leq 0\leq \beta_{i,3},  K(\mathbb{B}_i)=r, i=1,2$, which means  that the investor would fully invest in a risk-free asset regardless of the presence or absence of this data asset, which means this data asset is devoid of value and its purchase price would be zero.
\end{remark}

In the following, we suppose that  the ambiguities about drift and volatility are correlated, a higher return is associated with a larger risk.

\begin{assumption}\label{correlation}
Assume that $n=1$, and $R=r,\mathbb{A}=\mathbb{R}\times[\,\underline{c},\overline{c}\,]$ and
$\mathbb{B}_i=\{(\mu,\sigma):\mu=\underline{\mu}_i+\alpha,
\sigma=\underline{\sigma}_i^2+k\alpha^q,\alpha\in[\,0,\overline{\alpha}_i\,]\,\}$,
where $\underline{c},\overline{c},\underline{\mu}_i,\underline{\sigma}_i,k,q,\overline{\alpha}_i$ are constants satisfying $\underline{c}\leq \overline{c}$, and $\underline{\mu}_1\leq\underline{\mu}_2$ and
$$
\underline{\sigma}_2\geq\underline{\sigma}_1\geq0,\quad
k>0,\quad 0<q<1,\quad \overline{\alpha}_2\geq0,\quad
\overline{\alpha}_1\geq\max\left[\overline{\alpha}_2
+(\underline{\mu}_2-\underline{\mu}_1),\,
\left(\overline{\alpha}_2^q+\frac{\underline{\sigma}_2^2-
\underline{\sigma}_1^2}{k}\right)^{\frac{1}{q}}\right].
$$
\end{assumption}

 The limiting case $q=1$ means that the relationship
between the ambiguity about drift and the ambiguity about the
volatility square is linear, which is just Example 2.4 in
\cite{Epstein}. The other spectrum $q=0$ means no ambiguity about
volatility. Finally, $0<q<1$ means that the relationship between the ambiguity about drift and the ambiguity about the volatility
square is sub-linear, which is just example 4.3 in \cite{YLZ}.

\begin{theorem}\label{th-ex2}
Under Assumption \ref{correlation}, the constant  $K(\mathbb{B}_i)=(\mu_i^*-r)^2/(2(1-p)(\sigma^{*}_i)^2)+r$, where
$$
 \mu^*_i=\underline{\mu}_i+\alpha^*_i,\;\;
 \sigma^*_i=\sqrt{\underline{\sigma_i}^2+k(\alpha^*_i)^q},\;\;
 \alpha^*_i=\left\{
 \begin{array}{ll}
 r-\underline{\mu}_i,\;\; &-\overline{\alpha}_i<\underline{\mu}_i-r\leq 0;
 \\[2mm]
 \widehat{\alpha}_i, &0<\underline{\mu}_i-r
 <[\,2\underline{\sigma}_i^2\overline{\alpha}_i^{1-q}
 +{k(2-q)\overline{\alpha}_i\,]/(kq)};
 \\[2mm]
 \overline{\alpha}_i,&\mbox{otherwise},
 \end{array}
 \right.
$$
where $\widehat{\alpha}_i$ is the unique solution of the following algebra equations (for the case $\underline{\mu}_i-r>0$),
\bee
 2\underline{\sigma}_i^2+{k(2-q)\alpha^q-k q(\underline{\mu}_i-r)\alpha^{q-1}}=0.
\eee
\end{theorem}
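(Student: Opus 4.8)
\textbf{Proof plan for Theorem \ref{th-ex2}.}
The plan is to reduce the saddle-point computation defining $K(\mathbb{B}_i)$ in \eqref{eq-K,eq-f} to a one-variable minimization over the ambiguity parameter $\alpha$, and then solve that scalar problem by calculus. Since $R=r$ and $\mathbb{A}^\pi=\mathbb{R}$, the inner function simplifies to $F^1(x_\pi;x_\mu,x_\sigma^2)=\tfrac{p-1}{2}x_\pi^2x_\sigma^2+x_\mu x_\pi + r(1-x_\pi)$. For fixed $(x_\mu,x_\sigma^2)$ with $x_\sigma^2>0$, the supremum over $x_\pi\in\mathbb{R}$ is attained (since $p-1<0$) at $x_\pi^*=(x_\mu-r)/((1-p)x_\sigma^2)$, yielding the classical Merton value $(x_\mu-r)^2/(2(1-p)x_\sigma^2)+r$. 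By Remark \ref{re-saddle point} (the minimax equality \eqref{saddle point value}), $K(\mathbb{B}_i)$ equals the infimum of this expression over $(x_\mu,x_\sigma)\in\mathbb{B}_i$; parametrizing $\mathbb{B}_i$ by $\alpha\in[0,\overline{\alpha}_i]$ via $x_\mu=\underline{\mu}_i+\alpha$, $x_\sigma^2=\underline{\sigma}_i^2+k\alpha^q$, we must minimize
\[
 \Phi_i(\alpha):=\frac{(\underline{\mu}_i-r+\alpha)^2}{2(1-p)(\underline{\sigma}_i^2+k\alpha^q)}
\]
over $\alpha\in[0,\overline{\alpha}_i]$ (the additive $+r$ and positive factor play no role in locating the minimizer), and then set $\mu_i^*=\underline{\mu}_i+\alpha_i^*$, $\sigma_i^*=\sqrt{\underline{\sigma}_i^2+k(\alpha_i^*)^q}$.

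Next I would analyze $\Phi_i$ on $[0,\overline{\alpha}_i]$ by cases on the sign of $\underline{\mu}_i-r$. If $-\overline{\alpha}_i<\underline{\mu}_i-r\le 0$, then $r-\underline{\mu}_i\in[0,\overline{\alpha}_i)$ and at $\alpha=r-\underline{\mu}_i$ the numerator vanishes, so $\Phi_i$ attains its global minimum value $0$ there; this gives $\alpha_i^*=r-\underline{\mu}_i$ and $\mu_i^*=r$, consistent with $K(\mathbb{B}_i)=r$. If $\underline{\mu}_i-r>0$, the numerator $(\underline{\mu}_i-r+\alpha)^2$ is positive and strictly increasing on $[0,\overline{\alpha}_i]$, so the minimizer is interior or at the right endpoint, found by setting $\Phi_i'(\alpha)=0$. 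A direct differentiation (quotient rule) shows the sign of $\Phi_i'(\alpha)$ on $(0,\overline{\alpha}_i]$ agrees with the sign of
\[
 \psi_i(\alpha):=2(\underline{\sigma}_i^2+k\alpha^q) - kq\alpha^{q-1}(\underline{\mu}_i-r+\alpha)
 = 2\underline{\sigma}_i^2 + k(2-q)\alpha^q - kq(\underline{\mu}_i-r)\alpha^{q-1},
\]
which is exactly the algebraic equation in the statement. One checks $\psi_i(0^+)=-\infty$ (since $q-1<0$ and $\underline{\mu}_i-r>0$), so $\Phi_i$ is initially decreasing; and since $q\in(0,1)$, the term $\alpha^{q-1}$ is strictly decreasing while $\alpha^q$ is strictly increasing, so $\psi_i$ is strictly increasing on $(0,\infty)$, hence $\psi_i$ has a unique zero $\widehat{\alpha}_i>0$ and $\Phi_i$ is strictly decreasing on $(0,\widehat{\alpha}_i)$ and strictly increasing on $(\widehat{\alpha}_i,\infty)$. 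Evaluating $\psi_i(\overline{\alpha}_i)$ and rearranging $\psi_i(\overline{\alpha}_i)\ge 0 \iff \underline{\mu}_i-r\ge [2\underline{\sigma}_i^2\overline{\alpha}_i^{1-q}+k(2-q)\overline{\alpha}_i]/(kq)$ determines whether $\widehat{\alpha}_i\le\overline{\alpha}_i$ (so $\alpha_i^*=\widehat{\alpha}_i$) or $\widehat{\alpha}_i>\overline{\alpha}_i$ (so $\Phi_i$ is still decreasing at $\overline{\alpha}_i$ and $\alpha_i^*=\overline{\alpha}_i$), which matches the three-way split in the theorem.

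The main obstacle I anticipate is not the calculus but the bookkeeping around well-definedness and the boundary of the parameter domain: one must confirm $\mathbb{B}_i$ contains a point with $x_\sigma^2>0$ (guaranteed by $\underline{\sigma}_i\ge 0$ together with $k>0$ and $\overline{\alpha}_i>0$, or by $\underline{\sigma}_i>0$ when $\underline{\sigma}_1\ge 0$ allows equality — a small case to handle so that $\Phi_i$ and $x_\pi^*$ are finite), verify that the minimax equality of Remark \ref{re-saddle point} genuinely applies here (it does, since $F^1$ is concave in $x_\pi$ and, after the parametrization, the constraint set in $\alpha$ is a compact interval, matching the hypotheses behind Lemma \ref{lemma3.1}), and check the strict monotonicity of $\psi_i$ carefully when $\underline{\mu}_i-r>0$ versus the degenerate subcase $\underline{\mu}_i=r$ (where $\alpha_i^*=0$ falls under the first branch with $\alpha_i^*=r-\underline{\mu}_i=0$). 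Once these are in place, assembling $K(\mathbb{B}_i)=(\mu_i^*-r)^2/(2(1-p)(\sigma_i^*)^2)+r$ from $\alpha_i^*$ is immediate.
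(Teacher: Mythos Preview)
Your plan is correct and matches the standard argument; the paper itself gives no proof here and simply defers to Theorem~4.4 in \cite{YLZ}. Two minor points to fix when you write it up: the equivalence should read $\psi_i(\overline{\alpha}_i)\ge 0 \iff \underline{\mu}_i-r\le [\,2\underline{\sigma}_i^2\overline{\alpha}_i^{1-q}+k(2-q)\overline{\alpha}_i\,]/(kq)$ (your inequality is reversed), and you should also dispatch the remaining ``otherwise'' case $\underline{\mu}_i-r\le -\overline{\alpha}_i$, where $(\underline{\mu}_i-r+\alpha)^2$ is nonincreasing and $\underline{\sigma}_i^2+k\alpha^q$ increasing on $[0,\overline{\alpha}_i]$, so $\Phi_i$ is minimized at $\alpha_i^*=\overline{\alpha}_i$.
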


The proof can be refer to Theorem 4.4 in \cite{YLZ}.

\begin{assumption}\label{BP}{\bf(Assumption in \cite{BP} and \cite{PWZ})}
Assume $R=r,\mathbb{A}=\mathbb{R}^n\times[\,\underline{c},\overline{c}\,]$, where $\underline{c},\overline{c}$ are constants satisfying $\underline{c}\leq \overline{c}$, and
$\mathbb{B}_i=\{(\mu,\Sigma):(\mu-\widehat{\mu})^{\rm T}\Sigma^{-1}(\mu-\widehat{\mu})\leq \epsilon_i^2, \Sigma\in\mathbb{B}^\Sigma_i\}$,
where $\widehat{\mu}$ is a constant vector, $\epsilon_i$ are  positive constants satisfying $\epsilon_1\geq \epsilon_2$, and $\mathbb{B}^\Sigma_i$ are non-empty compact subset of ${\cal S}^n_+$ satisfying that there exist constants $\delta>0$ such that
$$
 \xi^{\rm T}\Sigma\xi\geq\delta|\xi|^2,\;\;\forall\;\;
 \xi\in\mathbb{R}^n,\Sigma\in\mathbb{B}^\Sigma_1,\qquad
 \mathbb{B}^\Sigma_1\supset\mathbb{B}^\Sigma_2.
$$
\end{assumption}

\begin{theorem}\label{th-BP}
Under Assumption \ref{BP}, the constant
\be\label{eq-H}
 K(\mathbb{B}_i)=
 \frac{\left[\max\left(\min\limits_{x_\Sigma\in \mathbb{B}^\Sigma_i}H(x_\Sigma)-\epsilon_i,0\right)\right]^2}
 {2(1-p)}+r,\qquad
 H(x_\Sigma):=\sqrt{\left(\widehat{\mu}-r{\bf 1}_n\right)^{\rm T}
 (x_\Sigma)^{-1}\left(\widehat{\mu}-r{\bf 1}_n\right)}.
\ee
\end{theorem}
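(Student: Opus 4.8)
The plan is to reduce the computation of $K(\mathbb{B}_i)$ to the explicit formula \eqref{eq-K,eq-f}, which under Assumption \ref{BP} reads
$$
 K(\mathbb{B}_i)=\sup_{x_\pi\in\mathbb{R}^n}\;\inf_{(x_\mu,x_\Sigma)\in\mathbb{B}_i}
 \left[\frac{p-1}{2}\,x_\pi^{\rm T}x_\Sigma x_\pi
 +x_\mu^{\rm T}x_\pi+r(1-{\bf 1}_n^{\rm T}x_\pi)\right],
$$
since $R=r$ kills the kink in $F^1$. The first step is to carry out the inner infimum over $x_\mu$ for fixed $x_\pi$ and $x_\Sigma$: because $x_\mu$ ranges over the ellipsoid $(\mu-\widehat\mu)^{\rm T}x_\Sigma^{-1}(\mu-\widehat\mu)\le\epsilon_i^2$ and the objective is linear in $x_\mu$, the worst case is attained at the boundary in the direction $-x_\Sigma x_\pi$, giving $\inf_{x_\mu} x_\mu^{\rm T}x_\pi=\widehat\mu^{\rm T}x_\pi-\epsilon_i\sqrt{x_\pi^{\rm T}x_\Sigma x_\pi}$ (a Cauchy--Schwarz / Lagrange computation). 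Substituting this back, the problem becomes, for fixed $x_\Sigma$,
$$
 \sup_{x_\pi\in\mathbb{R}^n}\left[\frac{p-1}{2}\,x_\pi^{\rm T}x_\Sigma x_\pi
 +(\widehat\mu-r{\bf 1}_n)^{\rm T}x_\pi-\epsilon_i\sqrt{x_\pi^{\rm T}x_\Sigma x_\pi}+r\right].
$$

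The second step is to perform this supremum over $x_\pi$. I would change variables to $y:=x_\Sigma^{1/2}x_\pi$, so the objective is $\frac{p-1}{2}|y|^2+(x_\Sigma^{-1/2}(\widehat\mu-r{\bf 1}_n))^{\rm T}y-\epsilon_i|y|+r$; writing $v:=x_\Sigma^{-1/2}(\widehat\mu-r{\bf 1}_n)$, note $|v|=H(x_\Sigma)$. Since $p-1<0$ the quadratic part is concave, and by rotational considerations the optimal $y$ is colinear with $v$, reducing the problem to a one-dimensional maximization of $\frac{p-1}{2}t^2+(|v|-\epsilon_i)|t|+r$ over $t\in\mathbb{R}$. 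If $|v|=H(x_\Sigma)\le\epsilon_i$ the maximizer is $t=0$ with value $r$; otherwise the maximizer gives value $\frac{(H(x_\Sigma)-\epsilon_i)^2}{2(1-p)}+r$. Thus for fixed $x_\Sigma$ the value equals $\frac{[\max(H(x_\Sigma)-\epsilon_i,0)]^2}{2(1-p)}+r$.

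The third step is the outer optimization: we still need $\inf_{x_\Sigma\in\mathbb{B}^\Sigma_i}$ of the expression just obtained — wait, here I must be careful about the order. Going back to \eqref{eq-K,eq-f}, the structure is $\sup_{x_\pi}\inf_{(x_\mu,x_\Sigma)}$, so after eliminating $x_\mu$ we have $\sup_{x_\pi}\inf_{x_\Sigma}$ of $\frac{p-1}{2}x_\pi^{\rm T}x_\Sigma x_\pi+(\widehat\mu-r{\bf 1}_n)^{\rm T}x_\pi-\epsilon_i\sqrt{x_\pi^{\rm T}x_\Sigma x_\pi}+r$; since $p-1<0$ and $x_\mapsto x_\pi^{\rm T}x_\Sigma x_\pi$ together with $\sqrt{x_\pi^{\rm T}x_\Sigma x_\pi}$ behave oppositely, I would invoke the minimax theorem (the objective is concave in $x_\pi$, convex in $x_\Sigma$ — using compactness of $\mathbb{B}^\Sigma_i$ and the uniform ellipticity $\xi^{\rm T}\Sigma\xi\ge\delta|\xi|^2$, which also guarantees the sup in $x_\pi$ is attained on a bounded set) to swap the order to $\inf_{x_\Sigma}\sup_{x_\pi}$, and then apply the one-dimensional reduction above to each fixed $x_\Sigma$. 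This yields $K(\mathbb{B}_i)=\inf_{x_\Sigma\in\mathbb{B}^\Sigma_i}\bigl(\frac{[\max(H(x_\Sigma)-\epsilon_i,0)]^2}{2(1-p)}+r\bigr)$; since $t\mapsto[\max(t-\epsilon_i,0)]^2$ is nondecreasing in $t\ge0$ and continuous, the infimum over the compact set $\mathbb{B}^\Sigma_i$ is achieved and commutes with the monotone transformation, giving exactly \eqref{eq-H} with $\min_{x_\Sigma\in\mathbb{B}^\Sigma_i}H(x_\Sigma)$ inside. Alternatively, one may cite Theorem 4.5 (or the relevant statement) in \cite{YLZ}, which under precisely Assumption \ref{BP} gives this formula, and only verify that the hypotheses there are met.

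The main obstacle is the justification of the minimax interchange and of pushing the minimization over $x_\Sigma$ through the $\max(\cdot,0)^2$: one must confirm that $H(x_\Sigma)=\sqrt{(\widehat\mu-r{\bf 1}_n)^{\rm T}x_\Sigma^{-1}(\widehat\mu-r{\bf 1}_n)}$ is continuous on $\mathbb{B}^\Sigma_i$ (immediate from uniform positive-definiteness, which makes $x_\Sigma\mapsto x_\Sigma^{-1}$ continuous and bounded), that the overall objective is jointly convex-concave with one side over a compact convex set so that Sion's minimax theorem applies, and that $x_\Sigma\mapsto[\max(H(x_\Sigma)-\epsilon_i,0)]^2$ attains its minimum — all routine given the stated assumptions. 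Everything else is elementary: a Cauchy--Schwarz computation for the $x_\mu$-infimum and a one-variable quadratic optimization with a kink at $0$ for the $x_\pi$-supremum. I would therefore keep the proof short, either reproducing the two elementary optimizations and citing \cite{YLZ} for the minimax step, or citing \cite{YLZ} wholesale after checking the hypotheses of Assumption \ref{BP}.
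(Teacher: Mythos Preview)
Your proposal is correct and follows essentially the same route as the paper's proof: eliminate $x_\mu$ over the ellipsoid (you use Cauchy--Schwarz, the paper uses KKT --- same result), reduce the $x_\pi$-supremum to a one-dimensional problem in $\sqrt{x_\pi^{\rm T}x_\Sigma x_\pi}$ (you via $y=x_\Sigma^{1/2}x_\pi$ and colinearity, the paper via the substitution $z=\sqrt{x_\pi^{\rm T}x_\Sigma x_\pi}$ and KKT on the sphere), and finally take $\min_{x_\Sigma}$ and push it through the monotone map. The only noteworthy difference is the justification of the interchange: the paper invokes the saddle-point equality already established in Remark \ref{re-saddle point} (itself resting on Lemma \ref{lemma3.1} from \cite{YLZ}) to write $K(\mathbb{B}_i)=\min_{x_\Sigma}\max_{x_\pi}\min_{x_\mu}F^1$ directly, whereas you argue via Sion's theorem --- both are valid, and your observation that the post-$x_\mu$ objective is concave in $x_\pi$ and convex in $x_\Sigma$ is correct.
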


The proof can be found in \cite{BP}. For the convenience of reading, we provide a simplified proof based on our results.

\begin{proof}
From \eqref{eq-K,eq-f} and repeating the similar argument in Remark \ref{re-saddle point}, we know that
$$
 K(\mathbb{B}_i)=\min\limits_{x_\Sigma\in \mathbb{B}^\Sigma_i}\widehat{K}_1(x_\Sigma),\;\;
 \widehat{K}_1(x_\Sigma):=
 \max\limits_{x_\pi\in \mathbb{R}^n}\widehat{K}_2(x_\Sigma,x_\pi),\;\;
 \widehat{K}_2(x_\Sigma,x_\pi):=
 \min\limits_{x_\mu\in\{x_\mu:(x_\mu,x_\Sigma)\in \mathbb{B}_i\}}F^1(x_\pi;x_\mu,x_\Sigma).
$$

In the first, we calculate $\widehat{K}_2(x_\Sigma,x_\pi)$. According to Karush-Kuhn-Tucker Condition, we know that $x_\mu^*$ and the Lagrange multiplier $\Lambda^*$ satisfy $\Lambda^*\geq0$, and
$$
 x_\pi+2\Lambda^*(x_\Sigma)^{-1}(x_\mu^*-\widehat{\mu})=0,\;\;
 \Lambda^*\left[(x_\mu^*-\widehat{\mu})^{\rm T}(x_\Sigma)^{-1}(x_\mu^*-\widehat{\mu})-\epsilon_i^2\right]=0,\;\; (x_\mu^*-\widehat{\mu})^{\rm T}(x_\Sigma)^{-1}(x_\mu^*-\widehat{\mu})-\epsilon_i^2\leq0.
$$
So we deduce that
$$
  x_\pi^{\rm T}x_\Sigma x_\pi=\left(2\Lambda^*(x_\Sigma)^{-1}(x_\mu^*-\widehat{\mu})\right)^{\rm T}x_\Sigma \left(2\Lambda^*(x_\Sigma)^{-1}(x_\mu^*-\widehat{\mu})\right)=4(\Lambda^*)^2(x_\mu^*-\widehat{\mu})^{\rm T}(x_\Sigma)^{-1}(x_\mu^*-\widehat{\mu})=4(\epsilon_i\Lambda^*)^2,
$$
and
$$
 \Lambda^*= \frac{\sqrt{x_\pi^{\rm T}x_\Sigma x_\pi}}{2\epsilon_i},\;\;
 x_\mu^*=\widehat{\mu}- \frac{\epsilon_i x_\Sigma x_\pi}{\sqrt{x_\pi^{\rm T}x_\Sigma x_\pi}},\;\;
 \widehat{K}_2(x_\Sigma,x_\pi)={p-1\over 2}\,x_\pi^{\rm T}x_\Sigma x_\pi-\epsilon_i\sqrt{x_\pi^{\rm T}x_\Sigma x_\pi}+\left(\widehat{\mu}-r{\bf 1}_n\right)^{\rm T}x_\pi+r
$$
provided $x_\pi\neq0$. Note that $\widehat{K}_2(x_\Sigma,x_\pi)$ still takes the above form even if $x_\pi=0$.

In order to obtain $\widehat{K}_1(x_\Sigma)$, we denote $z=\sqrt{x_\pi^{\rm T}x_\Sigma x_\pi}$, then we know that
$$
 \widehat{K}_1(x_\Sigma)
 =\max\limits_{z\geq0}\widehat{K}_{3}(z,x_\Sigma),\quad
 \widehat{K}_{3}(z,x_\Sigma):=
 \max\left\{\widehat{K}_2(x_\Sigma,x_\pi):x_\pi^{\rm T}x_\Sigma x_\pi=z^2\right\}.
$$

According to Karush-Kuhn-Tucker Condition, we know that $x_\pi^*$ and the Lagrange multiplier $\Lambda^*$ satisfy $\Lambda^*\leq0$, and
$$
 \widehat{\mu}-r{\bf 1}_n+2\Lambda^* x_\Sigma x_\pi^*=0,\qquad
 (x_\pi^*)^{\rm T}x_\Sigma x_\pi^*=z^2.
$$

As above, we deduce that
$$
  \Lambda^*=\frac{-H(x_\Sigma)}{2z},\qquad
  x_\pi^*=\frac{z(x_\Sigma)^{-1}\left(\widehat{\mu}-r{\bf 1}_n\right)}{H(x_\Sigma)},\qquad
  \widehat{K}_{3}(z,x_\Sigma)={p-1\over 2}\,z^2+[H(x_\Sigma)-\epsilon_i]z+r
$$
provided $\widehat{\mu}-r{\bf 1}_n\neq0$, where $H(x_\Sigma)$ is defined in \eqref{eq-H}. Note that $\widehat{K}_3(z,x_\Sigma)$ still takes the above form even if $\widehat{\mu}-r{\bf 1}_n=0$.

Since
$$
 \partial_z\widehat{K}_3=
 (p-1)\left[z-\frac{H(x_\Sigma)-\epsilon_i}{1-p}\right]\;\;
 \left\{
 \begin{array}{ll}
 >0,&z<[H(x_\Sigma)-\epsilon_i]/(1-p),
 \\[2mm]
 <0,&z>[H(x_\Sigma)-\epsilon_i]/(1-p),
 \end{array}
 \right.
$$
we deduce that
$$
 \widehat{K}_1(x_\Sigma)
 =\frac{\left[\max\left(H(x_\Sigma)-\epsilon_i,0\right)\right]^2}
 {2(1-p)}+r.
$$
From the above expression of $\widehat{K}_1(x_\Sigma)$, it is not difficult to obtain the conclusion.
\end{proof}

\begin{assumption} \label{as-sample sizes} Assume $R=r, n=1,\mathbb{A}=\mathbb{R}\times[\,0,+\infty\,]$, and
$$
 \mathbb{B}_i=\left[\,\widehat{\mu}-{t_{1-\alpha/2}(N_i-1)s\over \sqrt{N_i}},\,\widehat{\mu}+{t_{1-\alpha/2}(N_i-1)s\over \sqrt{N_i}}\,\right]\times\left[\,{(N_i-1)s^2\over \chi^2_{1-\alpha/2}(N_i-1)},\,{(N_i-1)s^2\over \chi^2_{\alpha/2}(N_i-1)}\,\right],
$$
where $\widehat{\mu}$ is the sample mean, and $s^2$ is the sample variance, and $N_i$ are the sample sizes satisfying $N_2>N_1$, and $1-\alpha$ is the confidence level.
\end{assumption}

\begin{theorem}\label{th-sample sizes}
Under Assumption \ref{as-sample sizes}, the constant
$$
 K(B_i)={\max\left[|\widehat{\mu}-r|
 -\frac{t_{1-\alpha/2}(N_i-1)s}{\sqrt{N_i}},\,0\right]^2\over 2(1-p)}
 {\chi^2_{\alpha/2}(N_i-1)\over(N_i-1)s^2}+r.
$$
\end{theorem}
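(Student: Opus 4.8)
The plan is to read off $K(\mathbb{B}_i)$ from Theorem \ref{th-ex1}, since Assumption \ref{as-sample sizes} is precisely the case of Assumption \ref{no correlation} in which $R=r$, the portfolio set is all of $\mathbb{R}$ (so $\underline{\pi}=-\infty$, $\overline{\pi}=+\infty$), $\mathbb{A}^c=[0,+\infty)$, and $\mathbb{B}_i=[\underline{\mu}_i,\overline{\mu}_i]\times[\underline{\sigma}_i^2,\overline{\sigma}_i^2]$ with
$$\underline{\mu}_i=\widehat{\mu}-\frac{t_{1-\alpha/2}(N_i-1)s}{\sqrt{N_i}},\qquad
\overline{\mu}_i=\widehat{\mu}+\frac{t_{1-\alpha/2}(N_i-1)s}{\sqrt{N_i}},\qquad
\overline{\sigma}_i^2=\frac{(N_i-1)s^2}{\chi^2_{\alpha/2}(N_i-1)},$$
and $\underline{\sigma}_i^2=(N_i-1)s^2/\chi^2_{1-\alpha/2}(N_i-1)$. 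The nesting $\mathbb{B}_2\subset\mathbb{B}_1$ demanded by Assumption \ref{no correlation} follows from the monotonicity in $N$ of $t_{1-\alpha/2}(N-1)/\sqrt{N}$ and of the chi-square quantile ratios; but since $K(\mathbb{B}_i)$ is determined box by box, one may equally well apply the single-box computation underlying Theorem \ref{th-ex1} (Theorem 4.2 of \cite{YLZ}) to each $\mathbb{B}_i$ separately, bypassing this point.

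The real work is to simplify Table \ref{table1} in this setting. Because $R=r$ we have $\beta_{i,1}=\beta_{i,2}=(\underline{\mu}_i-r)/((1-p)\overline{\sigma}_i^2)$, and because $\overline{\pi}=+\infty$, $\underline{\pi}=-\infty$, the extreme columns $\beta_{i,1}\ge\overline{\pi}$ and $\beta_{i,3}\le\underline{\pi}$ never occur, so only three regimes remain: when $\underline{\mu}_i>r$ (i.e. $\beta_{i,2}>0$), columns $K^1_{i,2}$ and $K^1_{i,4}$ coincide and give $K(\mathbb{B}_i)=(\underline{\mu}_i-r)^2/(2(1-p)\overline{\sigma}_i^2)+r$; when $\underline{\mu}_i\le r\le\overline{\mu}_i$ (i.e. $\beta_{i,2}\le0\le\beta_{i,3}$), column $K^1_{i,5}$ gives $K(\mathbb{B}_i)=r$; and when $\overline{\mu}_i<r$ (i.e. $\beta_{i,3}<0$), column $K^1_{i,6}$ gives $K(\mathbb{B}_i)=(\overline{\mu}_i-r)^2/(2(1-p)\overline{\sigma}_i^2)+r$. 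One should also check that the boundary columns $K^1_{i,3}$ (occurring at $\beta_{i,2}=1$) and $K^1_{i,7}$ agree with these at the respective endpoints, using identities such as $\underline{\mu}_i-r=(1-p)\overline{\sigma}_i^2$ when $\beta_{i,2}=1$. The three regimes then combine into the single expression
$$K(\mathbb{B}_i)=\frac{\big[\max(\underline{\mu}_i-r,\ r-\overline{\mu}_i,\ 0)\big]^2}{2(1-p)\overline{\sigma}_i^2}+r,$$
i.e. $K(\mathbb{B}_i)=d_i^2/(2(1-p)\overline{\sigma}_i^2)+r$, where $d_i$ is the distance from $r$ to the interval $[\underline{\mu}_i,\overline{\mu}_i]$.

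It remains only to substitute. Since the $\mu$-interval is centred at $\widehat{\mu}$ with half-width $t_{1-\alpha/2}(N_i-1)s/\sqrt{N_i}$, its distance to $r$ is $d_i=\max\big(|\widehat{\mu}-r|-t_{1-\alpha/2}(N_i-1)s/\sqrt{N_i},\,0\big)$, and $1/\overline{\sigma}_i^2=\chi^2_{\alpha/2}(N_i-1)/((N_i-1)s^2)$. Inserting these two facts into the displayed formula yields exactly the asserted expression for $K(B_i)$.

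I expect the only delicate point to be the bookkeeping in the middle step: verifying that the degeneracy $\beta_{i,1}=\beta_{i,2}$ together with the two infinite portfolio bounds really does collapse Table \ref{table1} to the stated three regimes, with all boundary columns consistent, so that the uniform distance-formula for $K(\mathbb{B}_i)$ is legitimate. Everything else is routine substitution and requires no new ideas.
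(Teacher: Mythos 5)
Your proposal is correct and takes essentially the same route as the paper, whose entire proof is the one-line remark that the result follows from Theorem \ref{th-ex1}; you simply carry out that specialization explicitly, using $R=r$ (so $\beta_{i,1}=\beta_{i,2}$) and the unbounded portfolio bounds to collapse Table \ref{table1} to the three-regime distance formula, then substituting the confidence-interval endpoints. The only cosmetic slip is your suggestion to verify the boundary column $K^1_{i,7}$, which you had already correctly ruled out since $\underline{\pi}=-\infty$ makes that regime vacuous.
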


The proof follows from Theorem \ref{th-ex1}.

\section{Numerical Illustrations}
Based on the above theoretical discussions, now we describe how to apply our data asset pricing method based on numerical illustrations. For the sake of simplicity, we use the simplest data asset: a dataset to study the nature of data asset price. The basic idea is as follows. The buyer has an existing dataset $\mathcal{X}$ and plans to purchase a new dataset $\mathcal{Y}$. The dataset seller has knowledge about both $\mathcal{X}$ and $\mathcal{Y}$, and calculates the maximum price $P$ that the buyer is willing to pay for $\mathcal{Y}$. Following \cite{GI}, we quantify the ambiguity set as confidence intervals.

The set $\mathbb{B}$ is derived using the following procedures. First, we get the confidence intervals of $\mu$ and $\sigma$ from $\mathcal{X}$ and $\mathcal{Y}$ as $\mathcal{I}^{j}_{\mu}$ and $\mathcal{I}^{j}_{\sigma}$, where $j\in\{\mathcal{X},\mathcal{Y}\}$. Then,
$$
\mathbb{B}_1:= \left(\mathcal{I}^{\mathcal{X}}_{\mu}\lor \mathcal{I}^{\mathcal{Y}}_{\mu}\right)\times \left(\mathcal{I}^{\mathcal{X}}_{\sigma}\lor \mathcal{I}^{\mathcal{Y}}_{\sigma}\right),\qquad
\mathbb{B}_2:= \mathcal{I}^{\mathcal{Y}}_{\mu}\times \mathcal{I}^{\mathcal{Y}}_{\sigma},
$$
where for two intervals $I_1$ and $I_2$, $I_1\lor I_2$ means the smallest interval containing $I_1$ and $I_2$.
It is straightforward that $\mathbb{B}_1 \supset \mathbb{B}_2$, so that our assumption for \eqref{indifference equality} is satisfied.

For $i\in\{\mathcal{X},\mathcal{Y}\}$, denote $\mathbb{B}_i = [\underline{\mu}^i,\bar{\mu}^i]\times [\underline{\sigma}^i,\bar{\sigma}^i]$ and $\hat{\mu}_i=\frac{\underline{\mu}^i+\bar{\mu}^i}{2}, \hat{t}_i=\frac{\bar{\mu}^i-\underline{\mu}^i}{2}$, we have
$$
	K(\mathbb{B}_i)=\frac{\max\left[|\hat{\mu}_i-r|-\hat{t}_i,0\right]^2}{2(1-p)}\frac{1}{\overline{\sigma}_i}+r,\quad i=1,2.
$$

To illustrate, we assume the return has a normal distribution $\mathcal{N}(\mu,\sigma^2)$ with true values $\mu=0.1$ and $\sigma=0.2$. However, we only observe two sample datasets $\mathcal{X}$ and $\mathcal{Y}$ simulated from this true distribution, with sample sizes $N_1$ and $N_2$, respectively. We fix the size of the small dataset $\mathcal{X}$ as $N_1=1000$, and vary the size of the larger dataset $\mathcal{Y}$ as $N_2=2000,4000,\ldots,20000$. For each pair of simulated dataset $(\mathcal{X},\mathcal{Y})$, we fit a normal distribution for both and define $(\underline{\mu}^i,\bar{\mu}^i)$ as the 95\% confidence interval for the estimation of mean and $\underline{\sigma}^i,\bar{\sigma}^i$ as the 95\% confidence interval for the estimation of the standard deviation, for $i\in\{\mathcal{X},\mathcal{Y}\}$, define $(\mathbb{B}_1,\mathbb{B}_2)$ correspondingly, and obtain the price $P$. We repeat the above procedure for $M=10000$ times, i.e., by generating $M$ pairs of the simulated dataset $\{(\mathcal{X}^j,\mathcal{Y}^j)\}_{j=1}^M$ and obtaining $M$ values of the prices $\{P_j\}_{j=1}^M$.We then report the average of these $M$ prices as the data asset price $P$.

The results are shown in Figure \ref{fig:priceSample}. Consistent with our intuition, the price of the new larger dataset over the smaller dataset increases in its sample size.

\begin{figure}[htbp!]
\centering
	\includegraphics[width=0.7\textwidth]{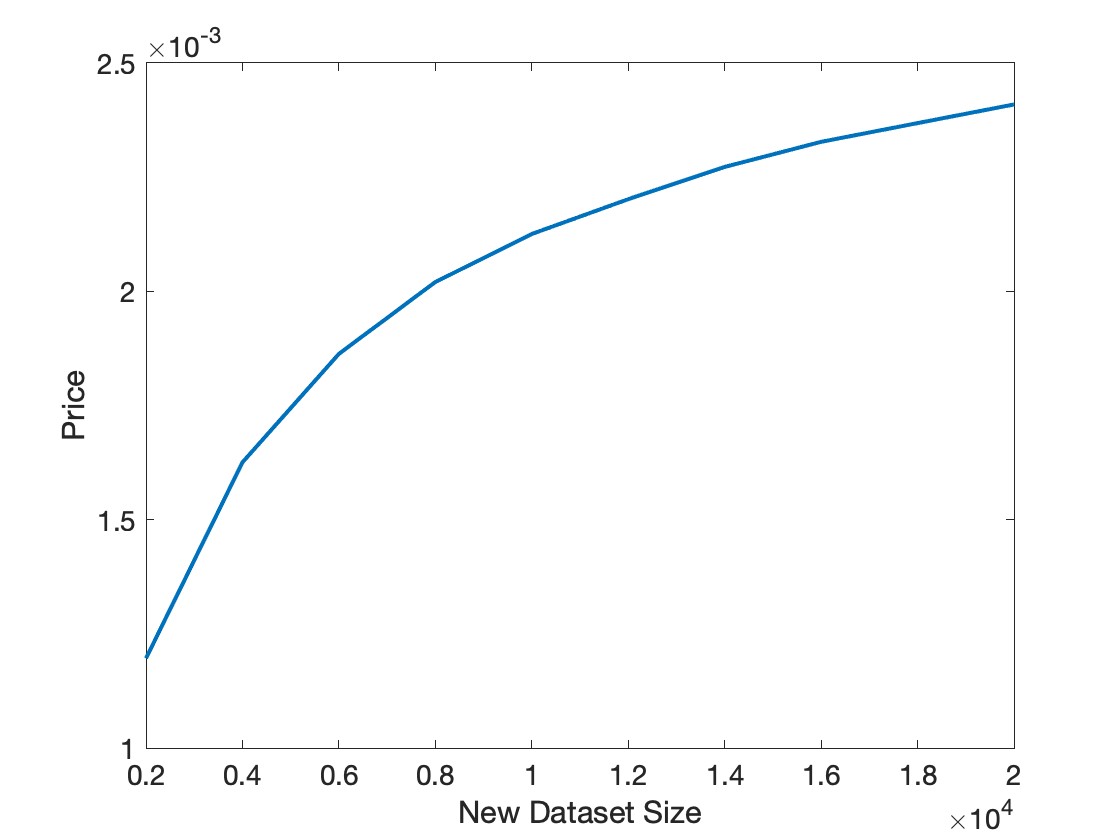}
	\caption{Dataset price $P$ with respect to new data size. Parameters: $x=1$, $\rho = 0.1$, $r = 0.04$, $p = 0.5$, and $\lambda = 0.2$.}\label{fig:priceSample}
\end{figure}

Next, we study the impact of the true values of $\mu$ and $\sigma$ on the price of the new dataset. To this end, we fix $N_1=1000$ and $N_2=5000$, and vary the true values of $\mu\in[0.04,0.3]$ and $\sigma\in[0.1,0.5]$. For each pair of true values $(\mu,\sigma)$, we generate $M=10000$ pairs of simulated datasets $\{(\mathcal{X}^j,\mathcal{Y}^j)\}_{j=1}^M$ and obtain the price $P$ using the same procedure as above. The result in Figure \ref{fig:priceSample2} shows that the price of the new dataset is increasing in $\mu$ and decreasing $\sigma$. In other words, the new dataset is more valuable if the stock has a higher return and lower volatility, or in general, in an overall more profitable and less risky market environment.

\begin{figure}[htbp!]
\centering
  \includegraphics[width=0.7\textwidth]{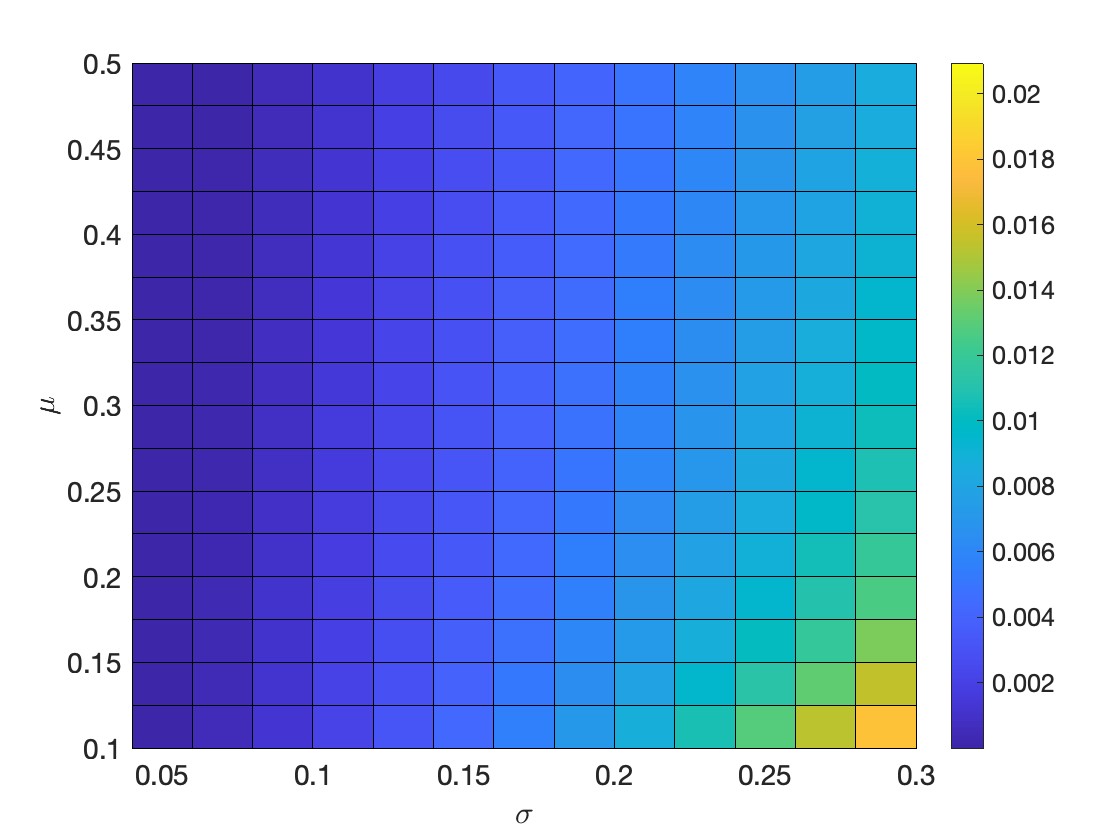}
    \caption{Dataset price $P$ with respect to various $\mu$ and $\sigma$, for $\mu\in[0.04,0.3]$ and $\sigma\in[0.1,0.5]$. Parameters: $x=1$, $\rho = 0.1$, $r = 0.04$, $p = 0.5$, and $\lambda = 0.2$.}\label{fig:priceSample2}
\end{figure}

Finally, we study the impact of the risk aversion $1-p$ and the weight on the intertemporal consumption $\lambda$  on the dataset price, by varying $p\in[0.1,0.9]$ and $\lambda\in[0.1,1]$. The result in Figure \ref{fig:priceSample3} shows that the price is lower for investors with higher risk aversion and lower weight on the intertemporal consumption. Indeed, they tend to allocate more in the risky asset and hence assign a higher value for the price of the data asset.

\begin{figure}[htbp!]
  \centering
    \includegraphics[width=0.7\textwidth]{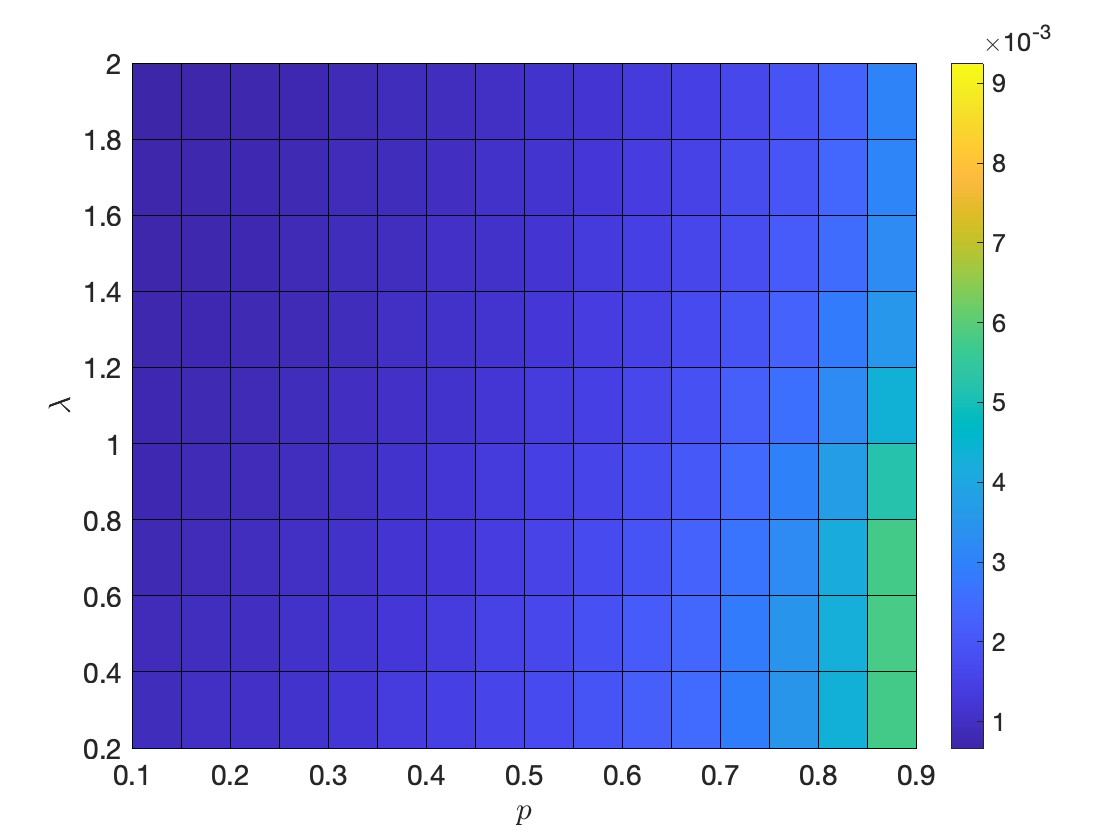}
    \caption{Dataset price $P$ with respect to various $p$ and $\lambda$, for $p\in[0.1,0.9]$ and $\lambda\in[0.2,2]$. Parameters: $\sigma=0.2$, $\mu=0.1$, $x=1$, $\rho = 0.1$, and $r = 0.04$. }\label{fig:priceSample3}
  \end{figure}

\newpage
\section{Conclusions}
In this paper, we developed a pricing model based on the informational value of data assets from the buyer's perspective through the indifference pricing principle, using the magnitude of parameter ambiguity regions to quantify information value of the data assets in the dynamic portfolio selection problems. Without consumption, the price of data assets in the optimal investment problem is strikingly simple, and the consumption will complicate the price. Moreover, we analyze the time-dependent monotonicity of data asset prices, when the utility discount rate is low, prices exhibit monotonic decay over time; conversely, with sufficiently high discount rates, prices first increase and subsequently decrease.

\newpage

 \end{document}